\pgfplotsset{compat=newest}
\def\thm@space@setup{%
  \thm@preskip=\parskip \thm@postskip=0pt
}
\declaretheorem{theorem}
\declaretheorem{lemma}
\declaretheorem{corollary}
\DeclarePairedDelimiter\ceil{\lceil}{\rceil}
\newcommand{\bone}{\mathbf{1}}
\newcommand{\bigo}{\mathcal{O}}
\renewcommand{\ceil}[1]{\left\lceil #1 \right\rceil}
\newcommand{\alg}{{\mathrm{ALG}}}
\newcommand{\opt}{{\mathrm{OPT}}}
\newcommand{\cP}{\mathcal{P}}
\newcommand{\PS}{\mathrm{PS}}
\newcommand{\PR}{\mathrm{PR}}
\newcommand{\DS}{\mathrm{DS}}
\newcommand{\DR}{\mathrm{DR}}
\newcommand{\virt}{E_{\mathrm{virt}}}
\newcommand{\phys}{E_{\mathrm{phys}}}
\title{Indirect Coflow Scheduling\thanks{This research was conducted while the authors were participants in the Simons Institute program on Algorithmic Foundations for Emerging Computing Technologies. }}
\date{}
\author{Alexander Lindermayr\thanks{Institut für Mathematik, Technische Universität Berlin, Germany.} \and Kirk Pruhs\thanks{Computer Science Department, University of Pittsburgh, USA. Supported by National Science Foundation grant CCF-2209654.}  \and Andréa W.~Richa\thanks{School of Computing and Augmented Intelligence, Arizona State University, USA. Supported in part by research grants NSF-CCF-2312537 and 2106917;
and U.S.\ ARO (MURI W911NF-19-1-0233).} \and Tegan Wilson\thanks{Khoury College of Computer Sciences, Northeastern University, Boston MA, USA.}}
\begin{document}
\maketitle

\begin{abstract}
We consider routing in reconfigurable networks, 
which is also known
as coflow scheduling in the literature.
The algorithmic literature  generally (perhaps implicitly) 
assumes that the amount of data to be transferred is large. 
Thus the standard way to model a collection of requested data transfers is by an integer demand matrix~$D$,
where the entry in row $i$ and column $j$ of $D$ is an integer representing the amount of information that the application wants 
to send from machine/node~$i$ to machine/node~$j$. 
A feasible coflow schedule is then a sequence of matchings,
which represent the sequence of data transfers that covers $D$.
In this work, we investigate coflow scheduling when the size of some of
the requested data transfers may be small relative to the amount
of data that can be transferred in one round.
In particular, we investigate algorithms that employ
fractional matchings and/or that employ indirect
routing, and compare the relative utility of these options. 
We design algorithms that perform much better for
small demands than the algorithms in the literature that were designed
for large data transfers. 
%
\end{abstract}

\thispagestyle{empty}

\newpage

\section{Introduction}

We consider routing in reconfigurable networks, also
known as {\em  coflow scheduling}, whose topologies dynamically adapt to the structure of the communication traffic~\cite{StoicaCoflow,CACM}.
There is 
currently no consensus in the community on 
what the best reconfigurable network design is,
and the community lacks models and metrics to rigorously
study and compare different designs~\cite{CACM}.
Our high level goal in this work is to fill some of this gap.

\subsection{Background and Motivation}

In coflow scheduling, there is a layer, which 
we will call the coflow layer, 
that sits  underneath the application layer, and  above the network layer, in the network protocol stack.
The application passes the coflow protocol a collection of 
requests for data transfers that the application wants to execute.
The task of the coflow layer is to sensibly schedule/manage 
when these requests for data transfers are passed to the
networking layer. 
Most notably the coflow layer should not overwhelm the network with more requests 
than the network can simultaneously support. 

The algorithmic literature in this area generally 
models this informal goal by the formal constraint
that at all times the pairs of nodes communicating form
a matching $M$, which in this context is a set 
$M =  \{ (s_1, t_1), (s_2, t_2), \ldots, (s_n, t_n)\}$ 
of source-destination pairs of nodes with distinct sources
and distinct sinks (so $i \ne j $ implies $s_i \ne s_j$ and $t_i \ne t_j$).
A coflow schedule is then a 
sequence of matchings $M_1, \ldots, M_T$, where matching
$M_t$ is passed to the networking layer at time~$t$,
and during the time interval $[t, t+1]$ the network layer 
transmits a unit of data from~$s_i$ to~$t_i$ for
each $(s_i, t_i) \in M_t$.

The input passed to the coflow layer is assumed to consist of 
 an integer demand matrix $D$,
where $D_{ij}$, the entry in row $i$ and column~$j$ of $D$, is an integer representing the amount of information that the application wants 
to send from machine/node $i$ to machine/node~$j$ (see, e.g.,~\cite{ChowdhuryS15,ChowdhuryZS14,AhmadiKPY20,ChowdhuryKPYY19,ImMPP19,RohwedderS25,JahanjouKR17,QiuSZ15,DinitzM20}). 
Then the objective is to optimize some 
standard quality-of-service metric, like minimizing {\em makespan},
which is the time when the last unit of data is transmitted, or minimizing the {\em average completion (or delay) time}, which is the average time that it takes a unit of data to 
reach its destination.
Some of the coflow scheduling literature also considers generalizations of
this basic setting, see \Cref{subsec:relatedwork}.

Originally, the proposal of a coflow layer
was to handle large numbers of data transfer requests,
whose sizes were also large~\cite{StoicaCoflow},
and so this has been the, perhaps implicit, assumption
in most of the algorithmic literature.
This is reflected in the modeling assumption that
the demands $D_{ij}$ are integer. 
Our goal in this work was to examine coflow 
scheduling when 
the amount of data transfer requests may be small,
as one could conceive in certain data-parallel applications.
We model this by allowing the demands $D_{ij}$
to be arbitrary nonnegative rational numbers;  
in particular demands may be less than 1.

To understand the additional issues that arise if some
of the requested data transfers are small, consider
the demand matrix $D$ where each entry $D_{ij}=\frac{1}{2n^{3/2}}$, where $n$ is the 
number of nodes used by the application. 
Then these data transfers can be supported by routing 
up to a unit of data from node $i$ to node $j$ at the time $t$ where 
$t = (j-i) \mod n$. This results in a makespan of  $n-1$ (and average completion
time of $(n-1)/2$),
and it might seem that obtaining sublinear makespan  is impossible,  
as $D$ cannot be expressed as a linear combination of
less than $n-1$ matchings. 

However, a makespan of $O(\sqrt{n})$ can 
be achieved  if the coflow
scheduler can indirectly route data, as it is commonly assumed
in the reconfigurable networks 
literature~\cite{shoal,sirius,shale,rotornet,rotornet-2,mars}. 
As an example of indirect routing,
a unit of data with source $s_i$ and destination $t_i$ could be routed using two
matchings $M_1$ and $M_2$: $M_1$ could send the data from source node $s_i$ to
an intermediate node $k$, and $M_2$ could route this 
data from node $k$ to destination node $t_i$.
One can achieve a makespan of $O(\sqrt{n})$ (for the instance where each $D_{ij}=\frac{1}{2n^{3/2}}$) in the following way.
The $n$ nodes are conceptually assigned
to nodes in an $\sqrt{n}\times \sqrt{n}$ square grid $G$.
Then the data transfers are divided into two phases,
and each phase is divided into $\sqrt{n}- 1 $ subphases.
In subphase $k$ of Phase 1, the node in row $i$ and column $j$
of $G$ routes to the node in row $[(i + k) \mod \sqrt{n}]$ and column $j$
all of its data whose final destination is in row $[(i + k) \mod \sqrt{n}]$. Thus after the first phase, the current location of
each unit of data is in the same row as its final destination.
Then subphase $k$ of Phase 2, the node in row~$i$ and column~$j$
of $G$ routes to the node in row $i$ and column $[(j + k) \mod \sqrt{n}]$
all of its data whose final destination is that node. 
See \Cref{fig:3x3-grid} for an illustration of this algorithm 
when $n=9$. 
\begin{figure}[tb]
    \centering
    \includegraphics[width=0.42\textwidth]{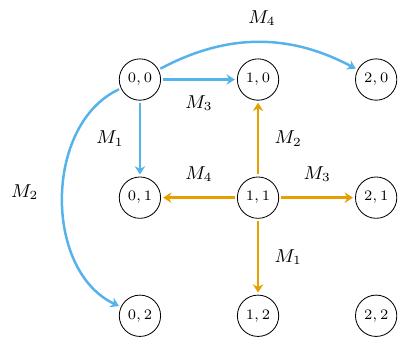}
    \caption{The data transmissions for nodes $(0, 0)$ and $(1, 1)$ for the $3\times 3$ grid formed from $n=9$ nodes. The matchings $M_1$ and $M_2$ belong to Phase 1, and the matchings
    $M_3$ and $M_4$ belong to Phase 2.}
    \label{fig:3x3-grid}
\end{figure}

Here we consider eight variations of coflow scheduling
with rational demand matrices, where ``eight'' comes from there
being two options to each of the following three characteristics:
\begin{description}
\item[Direct  vs. Indirect Routing.] We consider both allowing and disallowing 
indirect transfers.
    \item[Makespan vs. Average Completion Time Objective.] We consider both of the two most commonly considered quality-of-service objectives,
makespan and average completion time (which are essentially the infinity norm and one norm of the delays for the data units). 
\item[Fractional  vs. Integral Routing.] 
One conceptual ``solution'' for coflow scheduling of 
small data transfers is to allow
the network layer to
route a fractional matching in each unit of time. 
In this context, a fractional matching is given by a collection
$$ \{ (s_1, t_1, p_1), (s_2, t_2, p_2), \ldots, (s_n, t_n, p_n)\} $$
of triples
$(s_i, t_i, p_i)$ indicating that $p_i$ fractional units of
data are routed from $s_i$ to $t_i$, such that
for all nodes $v$, 
$\sum_{i : s_i=v} p_i \le 1$
and $\sum_{i : t_i=v} p_i \le 1$.
Note that a fractional matching transfers the same amount of data per unit time as an integer matching, so
it does not violate the unit throughput constraints at the network layer and thus it would be feasible in several networking scenarios. In contrast, such a relaxation is probably not feasible in 
settings where the 
network is optical
and the matching is being implemented by the rotation of 
mirrors~\cite{CACM,lightwave-fabrics,rotornet-2}.
Thus we consider both allowing fractional matchings and restricting
matchings to be integral. 
\end{description}

For each
of these eight variations, the goal is to find the best approximation ratio
that is achievable by a polynomial-time algorithm.
\Cref{tab:results} summarizes the best-known approximation ratios for all the variations of coflow scheduling.

\setlength{\tabcolsep}{10pt}
\begin{table}[h]
\centering
\caption{Overview of known approximation guarantees. We have provided references to some of the results, as appropriate; for the others, the corresponding approximation bounds follow directly from well-known standard approximation techniques.}\label{tab:results}
\begin{tabular}{lllll}
\toprule
 & \multicolumn{2}{l}{fractional matching} \quad & \multicolumn{2}{l}{integral matching} \smallskip \\           
& direct     & indirect       & direct     & indirect     \\ \midrule
makespan     &1~\cite{schrijver}   & 1~\cite{schrijver}  & 1    & $O(\log n)$     \\ \midrule
average completion & 1   &  $O(\log n)$  & $\sqrt{2}$~\cite{GandhiM09} &  $O(\log n)$    \\ 
\bottomrule
\end{tabular}
\end{table}

For direct integral routing for the makespan objective, optimal
schedules can be computed using an optimal algorithm
for edge coloring a bipartite graph~\cite{schrijver}.
For fractional routing with the makespan objective, 
the optimal makespan is the same for both direct and
indirect routing because the fractional edge coloring number
for a bipartite graph is the same as the maximum 
coloring number of the two underlying bipartite graphs~\cite{schrijver}.
Finally it is straightforward to model fractional direct
routing to optimize average completion time with a polynomial-size linear program. An $O(\log n)$-approximation is achievable 
for the rest of the variants by modeling the problem
as a linear program, where the edges in each time
step are over-capacitated by $O(\log n)$~\cite{Gonnet81}. It can be established
that this over-capacitation is sufficient via the probabilistic method.
For direct integral routing for average completion time, rounding the demands up to the next integer essentially does not change the problem, 
one can use the $\sqrt{2}$-approximation for integer
demands is also $\sqrt{2}$-approximate for rational demands~\cite{GandhiM09}.

In this paper, we focus on
the variants where a $O(1)$-approximation does not (to the best of our knowledge)
follow directly from known results,
which all involve indirect routing. 

\subsection{Summary of Our Results}
We present  two main results, stated in Theorems~\ref{thm:main1} and \ref{thm:main2}. Theorem~\ref{thm:main1} presents the {\em first $O(1)$-approximation algorithm for indirect fractional routing for the average completion time} objective:

\begin{theorem}
\label{thm:main1}
There is a polynomial-time algorithm $\alg$ for indirect routing with fractional matchings that outputs, for any valid instance, a schedule whose average completion time is at most 16 times the optimal average completion time. The algorithm only uses direct routing schedules, so the approximation bound is valid for both the direct and indirect routing settings with fractional matchings.
\end{theorem}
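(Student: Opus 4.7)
The strategy is to use a doubling/grouping rounding scheme applied to a polynomial-size LP that lower-bounds $\opt$ for indirect routing, and then round its solution into a direct fractional schedule using Birkhoff--von Neumann.

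First I would write a polynomial-size LP that is a valid lower bound on $\opt$ even for indirect routing. Introduce variables $y_{ij,t}\ge 0$ representing the amount of flow from $i$ to $j$ that \emph{arrives at its destination} at time $t$, and impose
\[
\sum_t y_{ij,t}=D_{ij},\qquad \sum_{j,\,t'\le t} y_{ij,t'}\le t\ \ \forall i,t,\qquad \sum_{i,\,t'\le t} y_{ij,t'}\le t\ \ \forall j,t.
\]
Any indirect schedule induces a feasible $y$ because a unit that has completed by time $t$ must have been dispatched from its source by time $t$, and a source dispatches at most one unit per round; the symmetric statement holds at destinations. Hence $\min\sum_{ij,t} t\,y_{ij,t}$ is a lower bound on the total completion time of \emph{any} indirect schedule, direct or not.

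Next, solve the LP, define the fractional completion times $C^*_{ij}=\frac{1}{D_{ij}}\sum_t t\,y^*_{ij,t}$, and partition the demand pairs geometrically by placing $(i,j)$ into class $k$ when $C^*_{ij}\in(2^{k-1},2^k]$. Let $D^{(k)}$ be the corresponding sub-demand matrix. A Markov-type argument gives that at least half of $D_{ij}$ arrives by time $2C^*_{ij}$ in the LP, which combined with the row constraint at $t=2^{k+1}$ shows that every row sum of $D^{(k)}$ is at most $4\cdot 2^k$, and symmetrically for columns. Then I would invoke the fact that any nonnegative matrix with row and column sums bounded by $\tau$ can be written as a convex combination of partial permutation matrices of total weight $\tau$ (Birkhoff--von Neumann, equivalently fractional bipartite edge coloring). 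This yields a direct fractional matching schedule for $D^{(k)}$ of length $4\cdot 2^k$.

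Finally, concatenate the per-class schedules in order of $k$. Class $k$ starts by time $\sum_{k'<k}4\cdot 2^{k'}<4\cdot 2^k$ and ends by $8\cdot 2^k$, so each unit in class $k$ completes by $2^{k+3}$ in $\alg$, while the LP charges at least $2^{k-1}$ per unit in class $k$. Summed over classes this gives $\alg\le 16\cdot \mathrm{LP}\le 16\cdot\opt_{\text{indirect}}$, and since the produced schedule uses only direct fractional matchings the same bound holds against $\opt_{\text{direct}}$. The main obstacle — and the crux of the argument — is the row/column-sum bound on $D^{(k)}$: it is exactly where the cumulative LP constraints (as opposed to only terminal makespan-style ones) are essential, and it is also where the factor of two from Markov combines with the factor of four from geometric concatenation to yield the constant $16$. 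Everything else (LP construction, Birkhoff--von Neumann decomposition, polynomial-size encoding after truncating time at $\sum_{ij}D_{ij}$) is standard.
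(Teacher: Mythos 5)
Your rounding analysis is essentially correct, and it is a genuinely different route from the paper's: the cumulative dispatch/receipt constraints are indeed valid for indirect schedules (a unit that has arrived by time $t$ must have left its source and entered its destination within the first $t$ rounds, and a node sends/receives at most one unit of flow per round), the Markov step does give row and column sums of $D^{(k)}$ at most $2^{k+2}$, fractional edge coloring schedules each class in $4\cdot 2^k$ rounds, and geometric concatenation yields $\alg\le 16\cdot\LP\le 16\cdot\opt$ with a direct fractional schedule. The one genuine gap is the claim of polynomial running time. Your LP has one variable per integer time step up to $T\approx\sum_{ij}D_{ij}$; for rational demands given in binary this horizon is the \emph{numeric value} of the demands, so the LP you must actually solve is only pseudo-polynomial in the input size (the paper's LPs have the same horizon, but they are never solved---they serve only as analysis tools for a combinatorial algorithm). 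The standard repair is to index time by geometric intervals $(2^{k-1},2^k]$: with variables $y_{ij,k}$, cumulative constraints $\sum_{j}\sum_{k'\le k}y_{ij,k'}\le 2^k$ (and the column analogue), and objective coefficients $2^{k-1}$, one still gets a polynomial-size lower bound on the indirect optimum, and the Markov/grouping step becomes unnecessary: scheduling the interval-$k$ mass $y^*_{\cdot\,\cdot\,k}$ as block $k$ (its row and column sums are at most $2^k$) and concatenating gives completion at most $2^{k+1}-1$ against an LP charge of $2^{k-1}$ per unit, i.e.\ a factor at most $4$, comfortably within the claimed $16$. So the theorem survives, but ``truncate at $\sum_{ij}D_{ij}$'' does not by itself give a polynomial-size formulation and this needs to be said.

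For comparison, the paper's algorithm solves no LP at all: it is the greedy that routes an arbitrary maximal fractional matching of the residual demand matrix in each round, and the factor $16$ comes from a ``double'' dual-fitting analysis. Two relaxations are introduced---one with only sender capacities and one with only receiver capacities, each set to $1/4$, so that their optima are at most $4\opt$---and dual solutions for both are read off the execution of the greedy ($\alpha$ variables from initial residual loads, $\beta$ variables from residual loads over time); these are shown feasible and shown to sum to at least $\alg/2$. What each approach buys: the paper's algorithm is simpler, purely combinatorial (and naturally suited to a residual/online viewpoint), and the simultaneous dual fitting against two relaxations is its main technical contribution; your approach is a classical completion-time LP rounding (geometric grouping by LP completion times plus Birkhoff--von Neumann), requires solving an LP, but once made polynomial-size as above it matches---and in fact can improve upon---the constant.
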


The algorithm $\alg$ is the natural greedy algorithm that
routes an arbitrary maximal matching at each unit of time. 
Perhaps the limited utility of indirectness in the
context of fractional routing for the objective of 
average completion time is somewhat surprising.
Although recall that a similar situation arises for
the makespan objective, where 
indirect routing is of no benefit.

Our algorithm 
introduces a novel dual fitting framework. As is usually the case in approximation algorithms,
the critical step is to find the ``right'' lower bounds
to the optimum to compare against. 
To that end,
we introduce two relaxations
of the problem. In the {\em Sender  Bound Problem}, there are only
capacity constraints on the amount of data a
sender can send in a unit of time, but there are no constraints
on the amount of data a receiver can receive in a unit of time.
Similarly, in the {\em Receiver Bound Problem}, 
there are only
capacity constraints on the amount of data a
receiver can receive in a unit of time, but there are no constraints
on the amount of data a sender can send in a unit of time.
We consider the natural linear programming formulations
PS and PR for the sender and receiver bound problems,
and their respective duals DS and DR. 
We then use dual feasible solutions for DS and DR as our lower bounds. 
We are not aware of a similar ``double'' dual fitting analysis in
the literature.

Our second main result appears in \Cref{thm:main2},  
with a full characterization of
the {\em worst-case makespan and average completion time} for {\em indirect integral routing},  
with respect to two parameters $n$ and $B$. 
In the future, we plan to leverage the bounds in Theorem~\ref{thm:main2} towards determining whether a $O(1)$-approximation for indirect integral routing is possible.

\begin{theorem}\label{thm:main2}
 Let $D$ be a rational demand
matrix where the maximum row or column sum is at most $B$.
There is a polynomial-time algorithm $\alg$
that uses indirect integral routing, and  guarantees makespan and average completion time
\[
\begin{cases}
    O(\log n) &\quad \text{ if } B \leq 2 \ , \\
    O(\frac{B \log n }{ \log B}) &\quad \text{ if } 2 \leq B \leq n\ , \text{ and} \\
    O(B) &\quad \text{ if } n \leq B\ . \\
\end{cases}
\]
Further if
each entry of $D$ is $\frac{B}{n}$ then the optimal
schedule has makespan and average completion time 
$\Omega(\max\{\log n, \frac{B \log n }{ \log B}, B \})$. Thus, algorithm $\alg$ is worst-case optimal with respect to both objectives.
\end{theorem}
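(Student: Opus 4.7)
The plan is to prove the upper and lower bounds separately, with the algorithm naturally splitting into three regimes determined by $B$.

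For the upper bound, when $B \geq n$ I would round each $D_{ij}$ up to the nearest multiple of $1$; the row and column sums of the resulting bipartite multigraph are at most $B + n = O(B)$, so König's edge-coloring theorem yields a direct-routing schedule of makespan $O(B)$. When $2 \leq B \leq n$ I would generalize the $\sqrt{n}\times \sqrt{n}$ grid scheme from the introduction to a $d$-dimensional hypercubic grid of side length $m \approx B$ with $d = \lceil \log n / \log B \rceil$, so that $m^d \geq n$. Phase $k$ routes each unit of data to its correct axis-$k$ coordinate; within each axis-$k$ line the problem reduces to a coflow on $m$ nodes whose row and column sums are $O(B)$, which König then solves in $O(B)$ rounds, for a total makespan of $O(dB) = O(B \log n / \log B)$. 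When $B \leq 2$ I would model direct integral routing as an LP, over-capacitate edges by an $O(\log n)$ factor, and apply randomized rounding with a Chernoff/union-bound analysis to extract an integral schedule of length $O(\log n)$. Because each schedule above processes all data in parallel throughout its duration, the makespan bound doubles as an average-completion-time bound up to constants.

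For the lower bound I would fix the uniform instance $D_{ij} = B/n$ and attack each term separately. The $\Omega(B)$ term is immediate: every node has outgoing demand $B$ and any integral matching moves at most one unit out of any node per round. The $\Omega(\log n)$ term follows from a doubling argument. For each destination $j$, let $S_j(t)$ be the set of origins whose data $j$ has received (directly or via intermediaries). One round lets $j$ receive from at most one partner $u$, so $S_j(t+1) \subseteq S_j(t) \cup S_u(t)$ and hence $|S_j(t)|$ at most doubles per round; since $D_{ij}>0$ for all $i$, the set $S_j(T)$ must contain all $n$ origins, forcing $T = \Omega(\log n)$. The $\Omega(B \log n / \log B)$ term refines this into a fan-out lower bound: any integral indirect routing assigns a path in the time-expanded graph to each demand, each intermediate has fan-out at most one per round, and delivering one origin's data to $n$ distinct destinations therefore requires paths of average length $\Omega(\log_B n)$. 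Multiplying by the $\Omega(B)$ per-node throughput floor gives the desired bound.

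The main obstacle is the tight $\Omega(B \log n / \log B)$ step, which must simultaneously handle the fractional smallness of individual demands (preventing direct single-edge coverage), the matching constraint (capping parallelism), and the adversarial choice of intermediaries under indirect routing. I anticipate defining a potential function in the spirit of sorting-network depth lower bounds that, for each source $i$, counts the nodes currently holding residual pieces of $i$'s outgoing demand, weighted by those residuals. The critical invariant will be that this potential grows by a factor of at most $O(B)$ per round (each node inheriting spread from at most one partner under a matching), while to cover all demands the potential must reach $\Omega(n^2)$; the resulting $\log_B n$ factor, combined with the $\Omega(B)$ per-round cost, yields the claim. Once this is established, the fact that the grid scheme in the middle regime exactly meets this floor confirms the worst-case optimality of $\alg$.
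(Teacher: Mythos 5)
Your proposal gets several pieces right (the $B\ge n$ upper bound, the $\Omega(B)$ and $\Omega(\log n)$ lower bounds via the doubling/path-counting argument, and the $d$-dimensional grid for uniform demands in the middle regime, which is essentially the paper's generalized hypercube), but there are two genuine gaps.

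First, the upper bounds as you describe them only work for (near-)uniform demands, while the theorem requires them for every $D$ with row and column sums at most $B$. In your middle-regime scheme the claim that ``within each axis-$k$ line the problem reduces to a coflow on $m$ nodes whose row and column sums are $O(B)$'' is false for general $D$: if every node in a line directs all of its demand toward destinations sharing one fixed axis-$k$ coordinate, the receiving node in that line must absorb up to $mB$ units in that phase. The paper closes exactly this hole with a Valiant load-balancing reduction: run the uniform-demand schedule twice, first spreading each commodity evenly over all $n$ intermediates and then delivering it, so that each half of the schedule faces (essentially) uniform traffic; you need this step or something equivalent. Your $B\le 2$ plan has a harder problem: direct integral routing on the uniform instance needs makespan at least $n-1$, because a node must meet $n-1$ distinct partners and an integral matching gives it one partner per round, so no LP for \emph{direct} routing over-capacitated by $O(\log n)$ can certify an $O(\log n)$ schedule, and randomized rounding cannot repair a violated degree (distinct-partner) constraint the way it repairs a capacity constraint. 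The paper instead uses the explicit hypercube bit-fixing design, which both exists and is constructible in polynomial time; note also that an $O(\log n)$-\emph{approximation} is not enough here, since OPT itself is $\Theta(\log n)$ and the theorem claims an absolute $O(\log n)$ bound.

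Second, the $\Omega(B\log n/\log B)$ lower bound is not established by ``average path length is $\Omega(\log_B n)$, multiply by the $\Omega(B)$ throughput floor.'' Path lengths are only forced to be large when the makespan $T$ is itself small: with $h$ hops one can reach about $\sum_{i=1}^{h}\binom{T}{i}$ destinations, so a schedule with large $T$ may legitimately use $O(1)$-hop paths, and conversely a product of two separate lower bounds is not a valid composition. Your proposed potential function does not rescue this: under a matching each node inherits spread from at most one partner per round, so the unweighted spread set at most doubles per round (giving only $\Omega(\log n)$), and the residual-weighted potential does not grow at all; there is no invariant with per-round growth factor $O(B)$. The paper's proof is a dichotomy on the \emph{median} hop count $h$ with threshold $d=\tfrac13\log_2 n/\log_2 B$: if $h\ge d$, half the demand ($Bn/2$ units) consumes at least $h$ edge-slots per unit against total capacity $nT$, giving $T\ge Bd/2$; if $h\le d$, some source reaches $n/2$ destinations by $\le h$-hop paths, so $n/2\le\sum_{i=1}^{h}\binom{T}{i}\le 2T^h/h!$, and Stirling plus monotonicity of $h\mapsto hn^{1/h}$ gives $T\ge dn^{1/d}/e=\Omega(B\log n/\log B)$. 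Some argument of this two-case form (topological counting when paths are short, congestion accounting when paths are long) is needed; your sketch currently contains neither case in a workable form.
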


Theorem~\ref{thm:main2} confirms the intuition that the 
worst-case input should be when the demand is evenly
spread out. As $\ceil{B}$ is a clear lower bound on the optimal makespan (and 
optimal average completion time), we also derive the following corollary to the theorem:
\begin{corollary}
 Algorithm $\alg$ from \Cref{thm:main2} attains an $O(\log n)$-approximation ratio for both makespan and average completion times, for all $B$. 
\end{corollary}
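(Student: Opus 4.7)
The plan is to derive the corollary directly from Theorem~\ref{thm:main2} by dividing its piecewise upper bound by the universal lower bound $\opt \ge \ceil{B}$ that the paragraph preceding the corollary asserts for both the makespan and the average completion time objectives. The justification for that lower bound is the bottleneck argument already invoked in the introduction: a node whose row or column sum equals $B$ can process at most one unit of data per round under integral matchings, so any schedule needs at least $\ceil{B}$ rounds to fully serve that node, which translates into an $\Omega(B)$ lower bound on both objectives (with the floor bound $\ceil{B} \ge 1$ covering instances where $B < 1$).

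Given this lower bound, the proof reduces to an arithmetic check across the three regimes of Theorem~\ref{thm:main2}. If $B \le 2$, then $\alg$ pays $O(\log n)$ while $\opt \ge 1$, giving ratio $O(\log n)$. If $2 \le B \le n$, then $\alg$ pays $O(B \log n / \log B)$ while $\opt \ge B$, giving ratio $O(\log n / \log B) = O(\log n)$, where the last equality uses the case hypothesis $B \ge 2$ to keep $\log B$ bounded below by the positive constant $\log 2$. If $n \le B$, then $\alg$ pays $O(B)$ while $\opt \ge B$, giving ratio $O(1) \subseteq O(\log n)$. Taking the maximum over the three cases yields the claimed $O(\log n)$-approximation for both objectives.

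There is no substantive obstacle; the corollary is essentially algebra on top of Theorem~\ref{thm:main2}. The only sensitive point is the boundary $B = 2$: the ratio $\log n / \log B$ from the middle regime would blow up if $B$ were allowed to approach $1$, and it is exactly the first regime $B \le 2$, combined with the trivial $\opt \ge 1$ bound, that absorbs this range and keeps the approximation guarantee bounded across all values of $B$.
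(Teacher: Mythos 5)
Your proposal matches the paper's own derivation: the paper gives no separate proof beyond the remark that $\ceil{B}$ is a lower bound on the optimal makespan (and average completion time), and the corollary then follows by dividing the three regime bounds of Theorem~\ref{thm:main2} by this lower bound, exactly the arithmetic you carry out (including the observation that the $B\le 2$ regime with $\opt\ge 1$ handles the range where $\log B$ would degenerate). So the proposal is correct and takes essentially the same route as the paper.
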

In some sense, the challenge of obtaining an $O(1)$-approximation 
appears to be to identify instances that admit schedules with less than worst-case objective value.

If $B \leq 2$, we can use Valiant's hypercube design~\cite{Valiant82} to achieve $O(\log n)$ makespan by using routing paths with $O(\log n)$ intermediate nodes. 
In a nutshell,
this approach  tries to maximize
the usage of intermediate nodes, because there is very little demand to route. 
Accordingly, the $\Omega(\log n)$ makespan lower bound
relies purely on topological properties, and does 
not take demands into account.
Alternatively, if $B \geq n$, we can use 
a single round-robin design and route flow directly from 
sources to destinations in maximum matchings to achieve $O(B)$  
makespan---there is no need for intermediate nodes because there are no small demands that can be ``combined.''
The $\Omega(B)$ lower bound relies purely on the maximum demand requested per source or destination node, without taking the
network topology into account.

The challenge in the $2 \leq B \leq n$ regime is thus to interpolate between these two extreme cases.
For the upper bound, we use a generalization of the hypercube design, which uses $dn^{1/d}$ distinct matchings, for some dimension parameter $d$.
Each matching must be repeated with a multiplicity $m$ that depends on $B,n,$ and $d$, and increases as $d$ increases, resulting in $mdn^{1/d}$ total makespan.
Setting $d$ appropriately results in the desired makespan bound $O(B \log n / \log B)$.
To prove the matching lower bound, we carefully partition routings into two categories. 
If a source can route to a large fraction of destinations using ``short'' paths with few intermediate nodes (even if only a small fraction of demand is routed there), we show a lower bound purely on network topology properties. 
Otherwise, if the routing uses ``long'' paths with many intermediate nodes for a large fraction of the total demand, we can prove a large makespan via a topology-independent averaging argument.
This results in the desired lower bound $\Omega(B \log n / \log B)$ for an appropriate threshold of ``short'' versus ``long'' paths.

\subsection{Related Work}
\label{subsec:relatedwork}

Coflow scheduling was popularized by
Chowdhury and Stoica~\cite{StoicaCoflow}, and followed by 
heuristics for online and offline settings~\cite{ChowdhuryZS14,ChowdhuryS15}.
A recent overview of reconfigurable networking can
be found in \cite{CACM}.

There is some literature on direct integral routings to 
optimize weighted completion time. 
Marx~\cite{Marx09} showed that this problem is APX-hard.
The natural greedy algorithm is a 2-approximation~\cite{Bar-NoyBHST98,Bar-NoyHKSS00}. 
Gandhi and Mestre~\cite{GandhiM09} give an improved $\sqrt{2}$-approximation, and
Halldórsson, Kim, and Sviridenko~\cite{HalldorssonKS11} show a $1.83$-approximation for a more general setting.
Other works in this area consider non-preemptive variants, where large demands must be scheduled consecutively in time~\cite{HalldorssonKS01,Kim05,GandhiHKS08}.
The fractional preemptive variants of these problems are also captured by the general polytope scheduling problem, as any feasible instantaneous allocation belongs to the matching polytope~\cite{ImKM18,JLM25}.

There is also some literature on the situation where
there are several different applications, each 
with their own demand matrix, and the objective is to
minimize the sum of the makespans over the different applications.
This literature considers direct integral routing for the average
completion time objective. 
The problem is APX-hard~\cite{Marx09}, 
and an even stronger hardness-of-approximation result ruling out any factor better than $2$ follows
from a reduction to the concurrent open shop problem~\cite{QueyranneS02}, which is hard under the Unique Games Conjecture~\cite{BansalK09}.
The first constant-factor approximation was shown by Qiu, Stein, and Zhong~\cite{QiuSZ15}, and after several improvements
\cite{KhullerP16,ShafieeG17,AhmadiKPY20}, the current
best approximation ratio is $3.415$~\cite{RohwedderS25}.
The special case where each coflow corresponds to the set of incident 
edges of a vertex is called \emph{data migration} and also 
has been studied extensively~\cite{Kim05,KhullerKW04,GandhiHKS08,GandhiM09,Mestre10}.
Other notable results are
online algorithms~\cite{KhullerLSSV19,BhimarajuNV20,DinitzM20,LLM25},
a matroid generalization~\cite{ImMPP19},
and generalization for general graphs~\cite{ChowdhuryKPYY19,JahanjouKR17}.
In particular some elements of our dual fitting analysis 
derive from the dual fitting analysis in \cite{DinitzM20}.

The second main area of related work is in Oblivious Reconfigurable Networks (ORNs).
The theoretical problem was formalized recently~\cite{AmirWSWKA22}, modeling the constraints of various proposed data center architectures using fast circuit switch technology~\cite{rotornet,rotornet-2,shoal,mars,shale}.
The ORN model allows for multi-hop indirect routing, its key connection to our work.
However, it also focuses on the streaming setting, where traffic demands arrive over time and must be continuously fulfilled.
Additionally, they focus on obliviousness, and frame their results as a guarantee on the minimum amount of traffic which can be fulfilled, rather than as a guaranteed approximation ratio.
One contribution of the ORN area is a generalized hypercube-style design with a tunable hop parameter~\cite{AmirWSWKA22,WilsonASWK23,shale}.
Our work in \Cref{sec:indirect-integral} uses similar ideas, but applied to a different use case.
Randomized oblivious connection schedules with some adaptive routing have also been considered in theory~\cite{WilsonASKSW24}, focusing on the same tradeoff guarantees as in the oblivious literature.
It is still unknown what can be done with a fully adaptive design.

\section{Notation and Formal Definitions}

We follow broadly the definitions and notations used in \cite{AmirWSWKA22}.
Our setting is a  network with a collection $N = \{0,\ldots,n-1\}$ of $n$ nodes.
The input is an $n \times n$ demand matrix $D$ with nonnegative rational entries, where $D_{ij}$ specifies the amount of data that needs to be transferred from node $i$ to node $j$.

We define 
a particular flow network
derived from the demand matrix $D$ that 
allows us to 
characterize 
feasible solutions 
for the problems we consider. 
The graph (or network) $G$ has
vertex set $N \times \{0,\ldots,T\}$, for some  $T\in \mathbb{N}$.
The edge set of $G$  is partitioned into edges $\phys$ that we will label
as \emph{physical}, and edges $\virt$ that we will label as \emph{virtual}.
The physical edges are of the form $(i,t) \to (j,t+1)$ for all $i \neq j \in N$ and $t \in \{0,\ldots,T-1\}$. 
The virtual edges are of the form  $(i,t) \to (i,t+1)$ for all $i \in N$ and $t \in \{0,\ldots,T-1\}$.
For all $i,j \in N$ and $t \in \{1,\ldots,T\}$,
let $\cP(i,j,t)$ be the set of all directed paths from vertex $(i,0)$
to vertex $(j,t)$ in~$G$. 
We define $\cP:= \cup_{i,j \in N, t \geq 1} \cP(i,j,t)$.
An \emph{indirect fractional routing}
is a flow $f : \cP \to \mathbb R_{\geq 0}$.
We say that the routing (or flow) $f$ is \emph{feasible} for demands $D$ 
if it satisfies the following constraints:
\begin{description}
    \item[Capacity Constraints:] each vertex has a total outgoing flow over physical edges of at most $1$, that is, for each $t \in \{0,\ldots,T-1\}$ and $i \in N$, 
    \[
    \sum_{j \in N \setminus \{i\}} \sum_{P \in \cP} f(P) \cdot \bone[((i,t), (j,t+1)) \in P] \leq 1 \ ,
    \] 
    and each vertex has a total incoming flow over physical edges of at most $1$, that is, for each $t \in \{0,\ldots,T-1\}$ and $j \in N$, 
    \[
    \sum_{i \in N \setminus \{j\}} \sum_{P \in \cP} f(P) \cdot \bone[((i,t), (j,t+1)) \in P] \leq 1 \ .
    \] 
    \item[Demand Satisfaction:] for each $i,j \in N$, $\sum_{t=1}^T \sum_{P \in \cP(i,j,t)} f(P) \geq D_{ij}$.
\end{description}
If a routing
$f$ routes $\alpha$ units of flow
along a path $P$ from $(i, 0)$ to $(j, t)$ that
contains a physical edge $(a, s) \rightarrow (b, s+1)$, 
then $\alpha$ units of data with source $i$ and destination 
$j$ are transmitted from node $a$ to node $b$ at time $s$.

A routing
$f$ is an \emph{integral} routing if every vertex in $G$ has
at most one incoming physical edge with positive flow,
and at most one outgoing physical edge with positive flow; hence, for all $t \in \{1,\ldots,T\}$, the physical edges between vertex sets $\{(i,t)\}_{i \in N}$ and $\{(j,t+1)\}_{j \in N}$ with positive flow form a matching in $G$.
A routing
$f$ is a \emph{direct} routing
if it sends positive flow only over paths that contain at most one physical edge.
A direct fractional routing and a direct integral routing can thus
be viewed as a sequence of fractional or integral matchings $M_0,\ldots,M_{T-1}$ in the complete bipartite
graph 
between two copies of the node set $N$ 
that covers~$D$, respectively.
Note that with our definitions, direct 
routings are a subset of indirect 
routings, and integral routings are a subset of fractional routings.

We say that a feasible routing $f$ for the network $G$ has a \emph{makespan} of $T$, if the last unit of data was delivered at time $T$.
The \emph{completion time} $t$ for a unit of data with source $i$ 
and destination $j$ is the smallest $t$ such that this data
arrives at a node of the form $(j, t)$. 
Thus, the total completion time of a schedule is then 
\[
 \sum_{t = 1}^T \sum_{i,j \in N} \sum_{P \in \cP(i,j,t)} t \cdot f(P) \ .
\]

\section{Fractional Matching, Indirect Routing, and Average Completion Time}

In this section we prove \Cref{thm:main1}.
For clarity of explanation, we prove the theorem  
for the total sum of completion times,
which is equivalent to the average completion time (up to a factor of $1/\sum_{i,j} D_{ij}$).
We start with a description of the polynomial-time algorithm $\alg$ 
that gives us \Cref{thm:main1}.

\paragraph{Algorithm $\bm{\alg}$:}
At each time $t$, the fractional matching $M_t$ is an arbitrary maximal fractional
matching of the residual demand matrix $D(t)$, where for all $i, j \in N$, 
\[
    D_{ij}(t) := D_{ij} - \sum_{t'=0}^{t-1} (M_{t'})_{ij} 
\]
where $(M_{t'})_{ij}$ denotes the amount of data on edge $(i,j)$ in $M_{t'}$.

\smallskip

The most important step in obtaining approximation ratio results is to find ``good'' lower bounds to the optimal objective value, which we will denote by $\opt$.
Here we consider the following two lower bounds, each of which are derived by relaxing the constraints in two different (but symmetric) ways.

\paragraph{Sender Bound Problem:}
For each fractional  matching
$M= \{ (s_i, r_i, p_i) \}$ 
and for all nodes $v$, we require that
$\sum_{i : s_i=v} p_i \le \frac{1}{4}$.
However, 
 $\sum_{i : r_i=v} p_i$ can be unbounded.

\paragraph{Receiver Bound Problem:}
For each fractional  matching
$M= \{ (s_i, r_i, p_i) \} $
and
for all nodes $v$, we require that
$\sum_{i : r_i=v} p_i \le \frac{1}{4}$.
However,
 $\sum_{i : s_i=v} p_i$ can be unbounded.

\medskip

In the sender bound problem there is no bound on the  rate that data can be transferred into a node,
and in the receiver bound problem there is no bound on the rate that data  can be transferred out of a node. 
Instead of limiting the amount of flow that can be transferred into or out of a node to one, as it may seem natural, we limit it to $\frac{1}{4}$.
This setting arises due to technical reasons and makes our dual fitting arguments more convenient.  
Note that 
\begin{align}
   \opt^S \le 4  \cdot \opt
 \text{ and }\opt^R \le 4 \cdot \opt
 \label{eq:optrs}
\end{align}
where $\opt^S$ and $\opt^R$ are optimal solutions  to the sender bound and receiver bound problems, respectively. This is because replicating each matching
in $\opt$ four times results in a solution that is feasible
for both the sender bound problem and the receiver bound
problem, and increases the objective value by at most a factor
of 4. 

We now give a linear programming
formulation $(\hyperlink{ps}{\PS})$ for the sender bound problem, 
and a linear programming formulation $(\hyperlink{pr}{\PR})$ for the receiver bound problem. For convenience we are going to rescale time
so that every demand $D_{ij}$ is an integer. Such
a rescaling does  not affect the approximation ratio.
In each formulation, the intended meaning of the variable $x_{ijt}$ is
the 
amount of data with source node $i$ destination node $j$ that is transmitted at time $t$. 

\smallskip

\noindent\fbox{%
    \begin{minipage}{\dimexpr\linewidth-2\fboxsep-2\fboxrule}

{
 \renewcommand{\arraystretch}{1.3}
 \[\begin{array}{rr>{\displaystyle}rc>{\displaystyle}l>{\;}l}
  \hypertarget{ps}{(\PS)} & \operatorname{min} &\multicolumn{3}{l}{\displaystyle \sum_{i \in N} \sum_{j \in N} \sum_{t=0}^{T} t \cdot x_{ijt} }   \\
  &\text{s.t.} &\sum_{t=0}^{T} x_{ijt} &\ge& D_{ij} & \forall i,j \in N   \\
   && \sum_{j \in N} x_{ijt} &\le& \frac{1}{4} &\forall i \in N,\  0 \leq t \leq T \\
  && x_{ijt}  &\ge& 0 &\forall i,j \in N, 0 \leq t \leq T
 \end{array}\]%
}%
\end{minipage}
}

\smallskip

\noindent\fbox{%
    \begin{minipage}{\dimexpr\linewidth-2\fboxsep-2\fboxrule}

{
 \renewcommand{\arraystretch}{1.3}
 \[\begin{array}{rr>{\displaystyle}rc>{\displaystyle}l>{\;}l}
  \hypertarget{pr}{(\PR)} & \operatorname{min} &\multicolumn{3}{l}{\displaystyle \sum_{i \in N} \sum_{j \in N} \sum_{t=0}^{T} t \cdot x_{ijt} }   \\
  &\text{s.t.} &\sum_{t=0}^{T} x_{ijt} &\ge& D_{ij} & \forall i,j \in N   \\
   && \sum_{i \in N} x_{ijt} &\le& \frac{1}{4} &\forall j \in N,\ 0 \leq t \leq T \\
  && x_{ijt}  &\ge& 0 &\forall i,j \in N, 0 \leq t \leq T
 \end{array}\]%
}%
\end{minipage}

}
\smallskip

We now give the dual linear program $(\hyperlink{ds}{\DS})$ of 
    the primal linear program $(\hyperlink{ps}{\PS})$, and the dual linear program
$(\hyperlink{dr}{\DR})$ of the primal linear program $(\hyperlink{pr}{\PR})$. 
In these dual programs the $\alpha$ variables are associated with the first set of constraints in the primal, and the $\beta$ variables are associated with the second set of constraints. 

\smallskip

\noindent\fbox{%
    \begin{minipage}{\dimexpr\linewidth-2\fboxsep-2\fboxrule}
{
 \renewcommand{\arraystretch}{1.9}
 \[\begin{array}{rr>{\displaystyle}rc>{\displaystyle}l>{\quad}l}
  \hypertarget{ds}{(\DS)}& \operatorname{max} &\multicolumn{3}{l}{\displaystyle\sum_{i,j \in N} D_{ij} \cdot \alpha^S_{ij} - \sum_{i \in N} \sum_{t=0}^T \beta^S_{it}} \\
  &\text{s.t.} & \alpha^S_{ij} - t  &\le& 4 \cdot \beta^S_{it}  &\forall i,j \in N,\ 0 \leq t \leq T \\
  &&\alpha^S_{ij}, \beta^S_{it} &\ge& 0 & \forall i,j \in N,\ 0 \leq t \leq T
 \end{array}\]
}%
\end{minipage}
}

\smallskip

\noindent\fbox{%
    \begin{minipage}{\dimexpr\linewidth-2\fboxsep-2\fboxrule}
{
 \renewcommand{\arraystretch}{1.9}
 \[\begin{array}{rr>{\displaystyle}rc>{\displaystyle}l>{\quad}l}
  \hypertarget{dr}{(\DR)} & \operatorname{max} &\multicolumn{3}{l}{\displaystyle\sum_{i,j \in N} D_{ij} \cdot \alpha^R_{ij} - \sum_{j \in N} \sum_{t=0}^T \beta^R_{jt}} \\
  &\text{s.t.} & \alpha^R_{ij} - t  &\le& 4 \cdot \beta^R_{jt}  &\forall i,j \in N,\ 0 \leq t \leq T \\
  &&\alpha^R_{ij}, \beta^R_{jt} &\ge& 0 &  \forall i,j \in N,\ 0 \leq t \leq T
 \end{array}\]
}%
\end{minipage}
}

\medskip

We now define dual feasible solutions for $(\hyperlink{ds}{\DS})$ and $(\hyperlink{dr}{\DR})$ related to the execution of the algorithm $\alg$. 
Recall that $D_{ij}(t)$ denotes the amount of data with source $i$ and destination $j$ that $\alg$ has not 
transmitted by time $t$. 
Let $D^S_i(t) = \sum_{j} D_{ij}(t)$ be the amount of data with source $i$ that $\alg$ has not 
transmitted by time $t$, and let $D^R_j(t) = \sum_{i} D_{ij}(t)$ be the amount of data with 
destination $j$ that $\alg$ has not transmitted by time $t$.
Finally let $D^S_i$ denote $D^S_i(0)$ and $D^R_j = D^R_j(0)$.

Now, we can define dual solutions for $(\hyperlink{ds}{\DS})$ and $(\hyperlink{dr}{\DR})$.
\begin{itemize}
    \item For all $i,j \in N$, we set $\alpha^S_{ij} =  D_i^S(0)$ and $\alpha^R_{ij} =  D_j^R(0)$.
    \item For all $i,j \in N$, we set $\beta^S_{it} = \frac14 D^S_i(t)$
    and $\beta^R_{jt} = \frac14 D^R_j(t)$.
\end{itemize}

Let $\DS(\alpha^S,\beta^S)$ and $\DR(\alpha^R,\beta^R)$ 
be the respective objective values of $(\hyperlink{ds}{\DS})$ and $(\hyperlink{dr}{\DR})$ for these solutions. 
For convenience, we use $\alg$ below
to denote both the algorithm and the value it outputs for the objective function.
We first show in \Cref{lemma:dual-obj} that the sum of these dual objectives upper bounds $\alg/2$. 
Then in  \Cref{lem:dual-feasible} we show that the respective dual solutions are feasible.

\begin{lemma}\label{lemma:dual-obj}
    $\DS(\alpha^S,\beta^S) + \DR(\alpha^R,\beta^R) \geq \frac12 \alg$.
\end{lemma}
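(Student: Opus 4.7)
The plan is to substitute the specified dual values, reduce to a combinatorial inequality between $\alg$ and $\sum_i (D^S_i)^2 + \sum_j (D^R_j)^2$, and then prove that inequality using the maximality of the greedy matchings. Set $X := \sum_i (D^S_i)^2$, $Y := \sum_j (D^R_j)^2$, and $Q_t := \sum_i D^S_i(t) = \sum_j D^R_j(t)$ for the total residual demand at time $t$. Expanding $\DS(\alpha^S,\beta^S)$ with $\alpha^S_{ij}=D^S_i$ and $\beta^S_{it}=\tfrac14 D^S_i(t)$, and using $\sum_j D_{ij}=D^S_i$, I obtain
\[
\DS(\alpha^S,\beta^S) \;=\; X - \tfrac14 \sum_t Q_t, \qquad \DR(\alpha^R,\beta^R) \;=\; Y - \tfrac14 \sum_t Q_t.
\]
Each unit of data $d$ contributes $1$ to $Q_t$ for every $t$ strictly before its completion time, so $\sum_t Q_t = \alg$. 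Therefore $\DS+\DR = X+Y - \tfrac12\alg$, and the lemma reduces to the combinatorial inequality $X + Y \ge \alg$.

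The heart of the proof is the following charging claim: for every pair $(i,j)$, the time $\tau_{ij} := \min\{t : D_{ij}(t)=0\}$ at which $\alg$ finishes delivering all demand from $i$ to $j$ satisfies $\tau_{ij}\le D^S_i + D^R_j$. Assuming this, the reduced inequality follows from
\[
\alg \;=\; \sum_{i,j}\sum_{t\ge 0} D_{ij}(t) \;\le\; \sum_{i,j} D_{ij}\cdot \tau_{ij} \;\le\; \sum_{i,j} D_{ij}(D^S_i + D^R_j) \;=\; X + Y,
\]
where the first inequality uses $D_{ij}(t)\le D_{ij}$ on $t<\tau_{ij}$ and $D_{ij}(t)=0$ on $t\ge\tau_{ij}$, and the last equality splits the sum into row and column sums of $D$.

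To prove the charging claim, I would fix $(i,j)$ and analyze each $t<\tau_{ij}$. Since $D_{ij}(t)>0$, maximality of the fractional matching $M_t$ at the edge $(i,j)$ forces one of: (a) sender $i$ saturated, $\sum_{j'} x_{ij't}=1$; (b) receiver $j$ saturated, $\sum_{i'} x_{i'jt}=1$; or (c) $x_{ijt}=D_{ij}(t)$. Case (c) drives $D_{ij}(t+1)$ to zero and so can occur only at $t=\tau_{ij}-1$. Letting $S_1$, $S_2$ denote the $i$-saturated and $j$-saturated times in $[0,\tau_{ij}-1]$, this already yields $S_1\cup S_2 \supseteq [0,\tau_{ij}-2]$ and hence $|S_1|+|S_2|\ge\tau_{ij}-1$; combined with the capacity bounds $|S_1|\le D^S_i$ and $|S_2|\le D^R_j$ (each saturation contributes exactly one unit to the corresponding total traffic), this gives the weaker bound $\tau_{ij}\le D^S_i + D^R_j + 1$.

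The main obstacle is shaving off the $+1$ when $t=\tau_{ij}-1$ falls in case (c). In that subcase $x_{ij,\tau_{ij}-1}>0$ is a positive contribution to $i$'s outgoing flow from a time outside $S_1$, while the times in $S_1$ already account for exactly $|S_1|$ units of outgoing flow, so $|S_1| < D^S_i$; integrality of $D^S_i$ (guaranteed by the paper's rescaling to integer demands) and of the count $|S_1|$ sharpens this to $|S_1|\le D^S_i-1$, and symmetrically $|S_2|\le D^R_j-1$, giving $\tau_{ij}-1\le D^S_i+D^R_j-2$. In cases (a) and (b) the boundary time $\tau_{ij}-1$ itself lies in $S_1\cup S_2$, strengthening the covering bound to $|S_1\cup S_2|\ge\tau_{ij}$ and directly yielding $\tau_{ij}\le D^S_i+D^R_j$. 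Combining all cases proves the charging claim and hence the lemma.
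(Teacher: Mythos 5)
Your proof is correct and follows essentially the same route as the paper: the identity $\sum_{i,t}\beta^S_{it}+\sum_{j,t}\beta^R_{jt}=\frac12\alg$ (total residual mass over time equals $\alg$) combined with a per-pair charging argument giving $\alg\le\sum_{i,j}D_{ij}(D^S_i+D^R_j)=\sum_{i,j}D_{ij}(\alpha^S_{ij}+\alpha^R_{ij})$, where the charging exploits maximality of the greedy fractional matching (any step that transmits no $(i,j)$ data must have sender $i$ or receiver $j$ saturated). The only difference is in bookkeeping at the final step for the pair $(i,j)$: the paper absorbs it through the $-D_{ij}$ slack in its bound $t\le D^S_i+D^R_j-D_{ij}$, while you handle it via the explicit case~(c) analysis and integrality of the rescaled demands — both suffice.
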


\begin{proof}
  Recall that
  \begin{align*}
  \DS(\alpha^S,\beta^S) + \DR(\alpha^R,\beta^R) 
  =\sum_{i,j \in N} D_{ij} \alpha^S_{ij} - \sum_{i \in N} \sum_{t=0}^T \beta^S_{it} +
    \sum_{i,j \in N} D_{ij}\alpha^R_{ij} - \sum_{j \in N} \sum_{t=0}^T \beta^R_{jt} \ . 
  \end{align*}
 We first analyze 
    $\DS(\alpha^S,\beta^S)$.
    Consider a particular sender receiver pair $(i,j)$.
    Let $T_{ij}$ denote the collection of times 
    $t$ when $\alg $ sends a positive amount of data from $i$ to $j$. If $\alg$ does not send data from $i$ to $j$ at time $t$
    it must be the case that 
   either  $\alg$ is sending unit of data  out of sender $i$
   or into receiver $j$. 
 Since there are at most $D_i^S + D_j^R - 2D_{ij}$ such times, and sending all data from $i$ to $j$ takes at most $D_{ij}$ (fractional) time, for every $t \in T_{ij}$ it is the case that
 \begin{align*}
     t \leq D_i^S + D_j^R - D_{ij} \ .
 \end{align*}
Thus the completion time of any unit of data with
source $i$ and destination $j$ is at most  $D_i^S + D_j^R - D_{ij}$.
This gives
\begin{align*}
    \alg & \le  \sum_{i,j \in N}  D_{ij} \cdot (D_i^S + D_j^R - D_{ij}) \nonumber \\
    &\le \sum_{i,j \in N}   D_{ij} \cdot (D_i^S + D_j^R) \nonumber \\
    &= \sum_{i,j \in N} \left(  D_{ij} \alpha_{ij}^S + D_{ij} \alpha_{ij}^R \right) \ . 
\end{align*} 
The equality follows by definition of the $\alpha$ variables. Now note that 
    \begin{align*}
        \frac{1}{2} \alg &= \frac14 \sum_{t=0}^T \sum_{i \in N} D^S_{i}(t) + \frac14 \sum_{t=0}^T \sum_{j \in N} D^R_{j}(t)  
        = \sum_{t=0}^T \sum_{i \in N} \beta^S_{it} + \sum_{t=0}^T \sum_{j \in N} \beta^R_{jt} 
    \end{align*}
 The first equality follows because 
    $\sum_{i} D^S_{i}(t) + \sum_{j} D^R_{j}(t)$ is equal to twice the total data at time $t$ that $\alg$ has not yet transmitted. 
    The second equality follows from the definition of $\beta$ variables. 
    Combining these two bounds on $\alg$, we conclude
\begin{align*}
    &\DS(\alpha^S,\beta^S) + \DR(\alpha^R,\beta^R) \\
    & =\sum_{i,j \in N} D_{ij} \alpha^S_{ij} - \sum_{i \in N}\sum_{t=0}^T \beta^S_{it} +
    \sum_{i,j \in N} D_{ij}\alpha^R_{ij} - \sum_{j \in N}\sum_{t=0}^T \beta^R_{jt} \\
    &\ge \alg - \frac{1}{2} \alg  =\frac{1}{2} \alg \ .
\end{align*}
This completes the proof of the lemma. 
\end{proof}

\begin{lemma}\label{lem:dual-feasible}
The dual solution    $(\alpha^S,\beta^S)$ is feasible for 
    the linear program $(\hyperlink{ds}{\DS})$ and the
    dual solution $(\alpha^R,\beta^R)$ is feasible for the linear program $(\hyperlink{dr}{\DR})$.
\end{lemma}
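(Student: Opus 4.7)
The plan is to verify each dual program's two types of constraints: non-negativity, and the inequality $\alpha - t \le 4\beta$. Non-negativity is immediate from the definitions, since every residual demand $D^S_i(t)$ and $D^R_j(t)$ is non-negative by construction, and the $\alpha$ variables are just residual demands at time $0$.

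For $(\hyperlink{ds}{\DS})$, I would substitute $\alpha^S_{ij} = D^S_i(0)$ and $\beta^S_{it} = \tfrac{1}{4} D^S_i(t)$ into the constraint $\alpha^S_{ij} - t \le 4\beta^S_{it}$. The factor of $4$ cancels the $\tfrac{1}{4}$ in the definition of $\beta^S_{it}$, reducing the constraint to $D^S_i(0) - D^S_i(t) \le t$. The left-hand side is exactly the total amount of data with source $i$ that $\alg$ has transmitted during the first $t$ time steps. Because each time step uses a fractional matching in the bipartite node graph, the sum of the flow rates leaving node $i$ at any single step is at most $1$; summing this rate bound over the $t$ steps gives the desired inequality. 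Note this argument does not use maximality of the matching used by $\alg$ — any sequence of fractional matchings respecting unit sender capacities would suffice.

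For $(\hyperlink{dr}{\DR})$ the argument is completely symmetric: the receiver capacity of the fractional matching at each step limits the inflow at node $j$ to at most $1$ per time step, so $D^R_j(0) - D^R_j(t) \le t$, which after the analogous substitution gives $\alpha^R_{ij} - t \le 4\beta^R_{jt}$.

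I do not anticipate any real obstacle. The key design point worth remarking on — and the only place one has to be careful — is that the factor of $4$ appearing in the dual constraints is precisely calibrated so that $\alg$'s unit-rate matching (rather than the rate-$\tfrac{1}{4}$ allowed by the relaxed sender/receiver bound LPs) exactly supplies the slack needed on the right-hand side. This $4$ is exactly what makes the simple accounting $D^S_i(0) - D^S_i(t) \le t$ yield dual feasibility with the $\tfrac{1}{4}$-scaled $\beta$ values, which are themselves chosen precisely so that $\sum_{i,t}\beta^S_{it}+\sum_{j,t}\beta^R_{jt}$ lines up with $\tfrac{1}{2}\alg$ in Lemma~\ref{lemma:dual-obj}.
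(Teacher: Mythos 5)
Your proposal is correct and matches the paper's proof: both reduce the constraint to $D^S_i(0) - D^S_i(t) \le t$ (equivalently $4\beta^S_{it} = D^S_i(t) \ge D^S_i - t = \alpha^S_{ij} - t$), which holds because at most one unit can leave sender $i$ per time step, with the receiver case symmetric. Your added remark that maximality of the matching is not needed here is accurate but does not change the argument.
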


\begin{proof}
First consider the linear program $(\hyperlink{ds}{\DS})$.
    Since at any time at most one unit can be sent from $i$, we have
    $4 \beta^S_{it} = D^S_{i}(t) \geq D^S_i - t = \alpha_{ij}^S - t$,
    which verifies that the dual constraints of  $(\hyperlink{ds}{\DS})$ hold.
    The argumentation for the linear program $(\hyperlink{dr}{\DR})$ is symmetric. 
\end{proof}
 We now conclude the proof of \Cref{thm:main1}:
\begin{proof}[Proof of \Cref{thm:main1}]
We now conclude by noting
\begin{align*}
     4 \cdot \opt + 4 \cdot \opt 
     & \geq \opt^R + \opt^S  \\
     &\ge \DS(\alpha^S,\beta^S) + \DR(\alpha^R,\beta^R)  
      \geq \frac12 \cdot \alg \ . 
\end{align*}
The first inequality follows from the observation in Line \eqref{eq:optrs}. 
The second inequality follows from weak duality and the dual feasibility (from \Cref{lem:dual-feasible}).
Finally, the last inequality follows from  \Cref{lemma:dual-obj}.
Thus we can conclude that the approximation ratio of $\alg$ is at most $16$. 
\end{proof}

\section{Integral Matching and Indirect Routing}\label{sec:indirect-integral}

In this section, we present the proof of  \Cref{thm:main2}.
We first consider the special case where every entry of the demand matrix is $B / n$, and show worst-case upper and lower bounds on makespan and average completion time. 
At the end of this section, we will then show that
we can lift these results to the general setting while only losing small constants.

We distinguish three main cases, $B > n$, $B < 2$, and $2 \leq B \leq n$, which we handle 
in the following. All of the proposed algorithms below run in polynomial time.

\begin{lemma}
    If all demands are equal to $B / n$ and $B > n$, then there exists an indirect integral routing scheme with makespan and average completion time at most $O(B)$.
    Moreover, every indirect integral routing scheme has makespan and average completion time at least $\Omega(B)$.
\end{lemma}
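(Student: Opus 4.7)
The plan is to prove matching upper and lower bounds of order $B$. Since $B > n$ and every entry of the demand matrix equals $B/n \ge 1$, indirect routing offers no real advantage in this regime: the bottleneck is simply the raw amount of data that each source must send out and each destination must receive, and direct round-robin routing already saturates that bottleneck.

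For the upper bound, I would use direct integral routing with a cyclic round-robin schedule. Define the matchings $M_1, \ldots, M_{n-1}$, where $M_k$ pairs source $i$ with destination $(i+k) \bmod n$; the self-demands $D_{ii}$ need no physical transfer. Run each $M_k$ for $\lceil B/n \rceil$ consecutive time steps so that the $B/n$ units of demand between each covered pair are delivered. Because $B > n$, we have $\lceil B/n \rceil \le B/n + 1 \le 2B/n$, and the total makespan is at most $(n-1) \lceil B/n \rceil \le 2B = O(B)$. Since the average completion time is bounded above by the makespan, it is also $O(B)$.

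For the lower bound, I would argue from the integrality of the matching alone, ignoring network topology. At every time step, any node $j$ is the head of at most one physical edge, so at most one unit of aggregate flow can enter $j$ in one step (regardless of whether that flow has destination $j$ or merely uses $j$ as an intermediate hop). In particular, at most one unit of destination-$j$ flow can arrive at $j$ per time step. Since $\sum_i D_{ij} = B$ must ultimately be delivered to $j$, this requires at least $B$ time steps, yielding the $\Omega(B)$ makespan bound. For average completion time, the same per-step throughput cap forces the sum of arrival times at $j$ to be at least $1 + 2 + \cdots + \lceil B \rceil = \Omega(B^2)$, attained by packing arrivals as early as possible. Summing over the $n$ destinations and dividing by the total demand $nB$ gives the $\Omega(B)$ bound on average completion time.

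The main (minor) subtlety is justifying that under indirect integral routing the amount of flow arriving at $(j,t)$ is indeed capped by one, even when many source-destination streams may share intermediate hops. This follows immediately from the matching integrality plus unit edge capacity: the unique physical edge entering $(j,t)$ carries at most one unit of aggregate flow, so whether a given flow unit terminates at $j$ or continues through $j$, its contribution is bounded by the same constraint. I expect this to be the only non-routine check, and it is essentially a rephrasing of the definitions.
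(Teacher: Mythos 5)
Your proposal is correct and matches the paper's argument essentially step for step: the same round-robin schedule with each matching repeated $\ceil{B/n}$ times for the $O(B)$ upper bound, and the same per-node unit-throughput observation for the $\Omega(B)$ lower bounds (the paper phrases the completion-time part via the median completion time at each sender, while you sum arrival times at each receiver, which is the symmetric and equivalent calculation). No gaps worth noting.
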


\begin{proof}
    We use a {\em single round-robin} design, which is in fact a direct routing scheme. However, we will see that it achieves a best-possible makespan in the worst-case (up to a constant) even among indirect routing schemes. 
    
    The connection schedule consists of $n$ distinct matchings, equal to the set of all cyclic permutations of $n$, with each matching repeated $\ceil{\frac{B}{n}}$ times.
    With this connection schedule, every possible source-destination pair is directly connected with multiplicity $\ceil{\frac{B}{n}}$.
    The routing protocol chooses to always route flow on direct connections.
    This gives us total makespan equal to the total number of matchings, $\ceil{\frac{B}{n}}n \leq 2B$.

    To prove the lower bound, note that every indirect routing scheme must be able to route all demands out of every source, and into every destination.
    In particular, the maximum row or column sum of the demand matrix is a lower bound on the makespan, and the maximum row or column sum is exactly $B$ when all demands are equal to $B / n$.
    Moreover, the median completion time of demands transmitted by any sender is at least $B/2$, and therefore, the average completion time of all demands is $\Omega(B)$. 
\end{proof}

\begin{lemma}\label{lem:indirect-small-B}
    If all demands are equal to $B / n$ and $B < 2$, then there exists an indirect integral routing scheme with makespan and average completion time at most $O(\log n)$.
    Moreover, every indirect integral routing scheme has makespan and average completion time at least $\Omega(\log n)$.
\end{lemma}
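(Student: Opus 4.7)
The plan is to prove the upper bound via a Valiant-style hypercube spreading and the lower bound via a ``doubling'' argument on the set of nodes that have received data originating at a fixed source.

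For the upper bound, first assume $n = 2^d$ and identify the nodes with the vertices of the $d$-dimensional hypercube. Use the $d$ dimension matchings $M_1,\ldots,M_d$, where $M_k$ simultaneously sends $v \to v \oplus e_k$ and $v \oplus e_k \to v$ for every $v$, in sequence. The invariant I want to prove by induction on $k$ is that after round $k$ every source $s$ has its $B$ initial units spread uniformly over the $2^k$ vertices that agree with $s$ outside bits $1,\ldots,k$, each holding $B/2^k$; after round $d$ every node therefore holds exactly $B/n$ units from every source, so every $D_{sj}$ is satisfied by time $d$. The only capacity check is that on the directed edge $(v,t)\to(v\oplus e_k,t+1)$, exactly the $2^{k-1}$ sources whose data is at $v$ at the start of round $k$ contribute, each pushing $B/2^k$, for a total flow of $B/2 < 1$; thus every $M_k$ is a valid integral matching and both the makespan and the (trivially bounded) average completion time are $d = O(\log n)$. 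For general $n$, the same idea works on a virtual $d$-cube with $d = \lceil \log_2 n\rceil$; I would either handle the at most $n$ phantom vertices by a short residual routing phase on the real nodes, or invoke Valiant's randomized two-phase routing (random intermediate plus bit-fixing) together with a standard Chernoff + union-bound congestion analysis, both of which preserve the $O(\log n)$ bound up to constants.

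For the lower bound, fix any source $s$ and let $A_t$ be the set of nodes that hold a positive amount of $s$-sourced data at time $t$, with $A_0 = \{s\}$. Since every $M_t$ is an integral matching, each node of $A_t$ can relay data to at most one new node, so $|A_{t+1}| \le 2|A_t|$ and hence $|A_t| \le 2^t$. Because $D_{sj} = B/n > 0$ forces $j \in A_{C_{sj}}$, we have $\#\{j : C_{sj} \le t\} \le 2^t$; in particular $n \le 2^T$, giving the makespan lower bound $T \ge \log_2 n$. The same inequality yields $\#\{j : C_{sj} > t\} \ge n - 2^t$, hence
\[
\sum_{j} C_{sj} \;\ge\; \sum_{t=0}^{\lfloor \log_2 n \rfloor - 1}(n - 2^t) \;=\; \Omega(n \log n),
\]
and dividing by the total weight $\sum_{i,j} D_{ij} = nB = \Theta(n)$ gives average completion time $\Omega(\log n)$.

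The step I expect to be the main obstacle is handling general (non-power-of-two) $n$ cleanly in the upper bound without losing factors; the power-of-two case reduces to the single capacity computation above, and the lower bound is a short induction followed by a routine geometric sum.
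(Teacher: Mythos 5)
Your proposal is correct and follows essentially the same route as the paper: the upper bound is Valiant's hypercube design (your uniform-spreading invariant decomposes into exactly the paper's bit-fixing paths, with your exact per-edge load $B/2<1$ replacing the paper's averaging argument), and your doubling bound $|A_t|\le 2^t$ is the same counting as the paper's $\sum_{d\le L}\binom{L}{d}\le 2^L$ path bound, both resting on integrality giving one outgoing physical edge per node per step. The only details to tidy are making $A_t$ cumulative (it need not be monotone, costing only a constant factor) and the non-power-of-two padding, which the paper itself dismisses with a w.l.o.g.
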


\begin{proof}
    We use a variant of Valiant's hypercube design~\cite{vlb}.
    Assume w.l.o.g.\ that~$n$ is a power of $2$, and represent each node $u\in \{1,\hdots,n\}$ in binary. 
    That is, $u = (u_1,\hdots,u_{\log_2(n)})$ where $u_i\in \{0,1\}$ for all $i$.
    The connection schedule $\mathcal{M}$ will consist of exactly $\log_2(n)$ matchings.
    For each $i\in\{1,\hdots,\log_2(n)\}$, the $i$th matching connects nodes which match in all but the $i$th coordinate.
    That is,
    \[ M_i(u_1,\hdots,u_{\log_2(n)}) = (u_1,\hdots, u_{i-1}, u_i\oplus 1, u_{i+1}, \hdots, u_{\log_2(n)}) . \]
    The routing protocol routes flow on the shortest path from source to destination.
    That is, to route from $u$ to $v$, a single physical edge is taken for every index for which $u$ and $v$ do not match.
    This strategy obtains makespan $\log_2(n)$.
    All that is left is to show that this routing protocol does not overload the physical edges from our matchings.

    Consider a node's binary representation. 
    On average each node $u$ will agree with other nodes $v$ on half of their binary indices.
    Therefore every node pair $u,v$ is connected by a routing path with an average of $\log_2(n) / 2$ physical edges.
    Because demands are uniform, all flow in the network 
    travels along $\log_2(n) / 2$ physical edges on average.
    Additionally, this also implies that the total flow $Bn$ will be evenly load-balanced across all $n \log_2 (n)$ physical edges of the network, leading to exactly 
    \[
    \frac{\log_2(n)}{2}\cdot\frac{Bn}{n \log_2(n)} \leq 1
    \] 
    flow traversing each physical edge, using $B \leq 2$.
    Thus, the flow is feasible on $\mathcal{M}$.

    To prove the lower bound, note that in order to route uniform demands $B/n$ for any $B>0$, there must be some routing path between every source-destination pair $u,v$.
    This is only possible when the number of distinct paths which leave any fixed node $u$ is at least $n$.
    Given some makespan bound $L$, the number of distinct paths leaving a fixed node $u$ 
    using exactly $1 \leq d \leq L$ physical edges is equal to~$\binom{L}{d}$.
    Thus, the number of distinct paths leaving $u$ using at most $L$ physical edges is 
    exactly $\sum_{d=1}^{L} \binom{L}{d} = 2^L-1$.
    Therefore, it must be the case that $n\leq 2^L$, or equivalently, $L\geq \log_2(n)$.
    Using the same argument on source $u$ for reaching $n/2$ destinations, we can see the median completion time
    of demands sent from $u$ is at least $\log_2(n) / 2$,
    hence the average completion time of all demands is $\Omega(\log n)$. 
\end{proof}

For the 
case
$2 \leq B \leq n$, we split our results into two lemmas. We start 
with the upper bounds.

\begin{figure}[tb]
		\centering
			\includegraphics[width=0.75\textwidth]{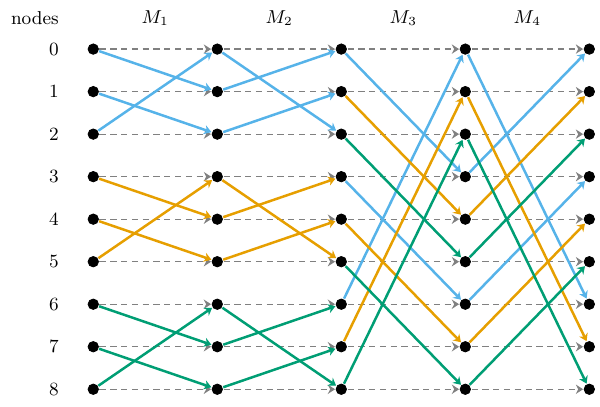}
			\caption{The time-expanded integral routing scheme of a hypercube with dimension $d=2$ and $n=9$.}\label{fig:time-exp}
\end{figure}

\begin{lemma}
    If all demands are equal to $B / n$ and $2 \leq B \leq n$, then there exists an indirect integral routing scheme with makespan and average completion time at most $O(B \log n / \log B)$.
\end{lemma}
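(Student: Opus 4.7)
The plan is to generalize the hypercube construction from \Cref{lem:indirect-small-B} to a $d$-dimensional grid of side length $k = \lceil n^{1/d} \rceil$, where the parameter $d$ will be tuned at the end. Identifying each node with a tuple $(u_1,\ldots,u_d) \in \{0,\ldots,k-1\}^d$ (padding with dummy nodes if $n$ is not an exact $d$th power), I would, for each dimension $i \in \{1,\ldots,d\}$ and each shift $\delta \in \{1,\ldots,k-1\}$, introduce a matching $M_i^\delta$ that maps each node $u$ to the node obtained by replacing $u_i$ with $u_i + \delta \pmod{k}$. This yields $d(k-1) \leq d n^{1/d}$ distinct matchings, each of which will be repeated $m$ times, where $m$ is determined below.

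For the routing protocol, every $(u,v)$-demand is sent along the greedy dimension-by-dimension correction path, using matching $M_i^{\delta_i}$ with $\delta_i = v_i - u_i \pmod{k}$ at the step for dimension $i$ (skipping the dimensions where $u_i = v_i$). Each such path uses at most $d$ physical edges. The key load computation fixes an edge $w \to w'$ of $M_i^\delta$ and counts the source-destination pairs whose path traverses it: such pairs must satisfy $u_j = w_j$ for $j \geq i$ (coordinates not yet corrected at this step) and $v_j = w_j$ for $j < i$ (coordinates already corrected), leaving $k^{i-1} \cdot k^{d-i} = k^{d-1}$ pairs. Multiplying by the per-pair demand $B/n = B/k^d$ gives a total load of exactly $B/k = B/n^{1/d}$ on every physical edge, independent of $i$, $\delta$, and $w$. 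Setting $m = \lceil B/n^{1/d} \rceil$ and distributing flow evenly across the $m$ copies of each matching makes every copy carry load at most $1$, and since each time step is a single cyclic-shift permutation the schedule is integral in the sense of Section~2.

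I would then realize this as a concrete schedule by processing the demands in $m$ batches. Within each batch, the $d(k-1)$ matchings are scheduled in dimension order, and each demand transmits a $1/m$ fraction of its flow sequentially along its correction path, arriving at its destination by the end of the batch. A data unit scheduled in the $j$-th batch therefore completes by time $j \cdot d(k-1)$, so the makespan satisfies
\[
 m\, d\, (k-1) \;\leq\; \bigl(B/n^{1/d} + 1\bigr)\, d\, n^{1/d} \;=\; B d + d n^{1/d},
\]
and the average completion time is at most the makespan. Choosing $d = \lceil \log n / \log B \rceil$ balances the two terms (since then $n^{1/d} = \Theta(B)$ and $m = O(1)$), yielding the desired bound $O(B d) = O(B \log n / \log B)$ for both objectives.

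The main obstacle is verifying the uniform load computation cleanly: it relies on the symmetry of the generalized hypercube together with the ``correct in order'' routing rule, and any less structured routing would give only an average-case load bound, not a per-edge guarantee. The rounding issues when $n$ is not a perfect $d$th power or when $B/n^{1/d}$ is not an integer affect only constants, and can be absorbed by padding $n$ up to the nearest $k^d$ and taking ceilings in $m$ and $d$.
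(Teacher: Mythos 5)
Your construction is essentially the paper's proof: the same generalized hypercube (Elementary Basis) schedule with $d=\lceil \log n/\log B\rceil$ dimensions, one matching per coordinate and cyclic shift repeated with multiplicity $\lceil B/n^{1/d}\rceil$, greedy coordinate-fixing routing, and the same makespan calculation $O(Bd + dn^{1/d}) = O(B\log n/\log B)$. The only difference is that you verify the unit-capacity constraint by an exact per-edge count ($k^{d-1}$ pairs per edge, giving load exactly $B/n^{1/d}$ per matching copy), which is a slightly more explicit substitute for the paper's symmetry/averaging argument and is correct.
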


\begin{proof}
    Let $d = \ceil{\log_2 n / \log_2 B}$ 
    and assume w.l.o.g.\ that $n^{1/d}$ is an integer.
    Represent nodes $u\in \{0,\hdots,n-1\}$ as $d$-tuples $u = (u_1,\hdots,u_d)$, where each $u_i\in\{0,\hdots,n^{1/d}-1\}$.
    For our connection schedule, we use a single period of the Elementary Basis connection schedule (Section 4.1 of~\cite{AmirWSWKA22}), illustrated in \Cref{fig:time-exp} for dimension parameter $d=2$ and explained in detail below.
    We will use multiplicity of matchings $\ceil{B /n^{1/d}}$.
    For each index $i\in\{1,\hdots,d\}$ and each scale factor $s\in \{1, \hdots, n^{1/d}-1\}$, we define the matching $M_{t=(i,s)}$ which connects nodes $u$ to nodes $v$ which match $u$ in all but the $i$th coordinates, and which differ from $u$ in the $i$th coordinate by exactly $+s( \mbox{mod } n^{1/d})$. 
    That is,
    \[ M_{t=(i,s)}(u) = (u_1,\hdots,u_{i-1},u_i + s (\mbox{mod } n^{1/d}),u_{i+1},\hdots,u_d ) . \]
    Our connection schedule repeats each matching $M_t$ exactly $\ceil{B /n^{1/d}}$ times. Hence, the length of
    the schedule is equal to $T = d \cdot (n^{1/d}-1) \cdot \ceil{B /n^{1/d}}$.
    
    To route between an arbitrary node pair $u$ to $v$, greedily route flow across any physical edge which connects to a node with decreased Hamming distance to $v$ (i.e.\ it matches $v$ in the modified $i$th coordinate).
    This process will use at most $d$ physical edges to route between each pair $u,v$.

    We next show that no edge is overloaded.
    Since there are $n^{1/d}$ choices for each of the $d$ entries
    of node's tuples, on average each node $u$ agrees with another
    node $v$ on $d / n^{1/d}$ entries. 
    Therefore, every node pair $u,v$ is connected by a routing
    path using an average of $d - d / n^{1/d}$ physical edges.
    Because uniform traffic is being routed, flow will be evenly 
    balanced across all physical edges. 
    Since a total of $Bn$ flow is routed over $Tn$ physical edges,
    the load on each edge is
    \begin{align*}
        \bigg(d - \frac{d}{n^{1/d}} \bigg) \cdot \frac{Bn}{Tn}  
        &= \bigg(d - \frac{d}{n^{1/d}} \bigg) \cdot \frac{B}{d \cdot (n^{1/d}-1) \cdot \ceil{\frac{B}{n^{1/d}}}} \\
        &= \frac{B}{n^{1/d} \cdot \ceil{\frac{B}{n^{1/d}}}} \leq 1 \ .
    \end{align*}

    Finally, we can bound the makespan as follows:
    \begin{align*}
        T &= \ceil{\frac{B}{n^{1/d}}}\cdot dn^{1/d} \\
         &\leq dn^{1/d} + dB \\
         &= \ceil{\frac{\log_2 n}{\log_2 B}} \cdot \left( n^{1/\ceil{\frac{\log_2 n}{\log_2 B}}} + B \right) \\
        & \leq \left( \frac{\log_2 n}{\log_2 B} + 1 \right) \cdot \left( n^{\frac{\log_2(B)}{\log_2(n)}} + B \right) \\
        &= \left( \frac{\log_2 n}{\log_2 B} + 1 \right) \cdot 2B \in \bigo\left( \frac{B \log_2(n)}{\log_2(B)} \right) .
    \end{align*}
    This completes the proof of the lemma. 
\end{proof}

\begin{lemma}
    If all demands are equal to $B / n$ and $2 \leq B \leq n$,
    then every indirect integral routing scheme has makespan and average completion time at least
    $\Omega(B \log n / \log B)$.
\end{lemma}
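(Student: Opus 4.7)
The plan is to fix the makespan $T$ of an arbitrary indirect integral routing and pick a short/long threshold $L := \lfloor \log_2 n / \log_2 B \rfloor$, chosen precisely so that $n^{1/L} \geq B$; note $L \geq 1$ since $B \leq n$ and $L \leq \log_2 n$ since $B \geq 2$, so we are non-degenerate across the whole range. For every source-destination pair I split the routed flow into a \emph{short} part, sent along paths using at most $L$ physical edges, and a \emph{long} part, sent along paths with strictly more than $L$ physical edges. For each source $u$ I then call $u$ \emph{short-dominated} if the total short flow leaving $u$ is at least $B/2$, and \emph{long-dominated} otherwise. Since the demand out of $u$ sums to $B$, every source falls in at least one class, so by pigeonhole one of the two classes contains at least $n/2$ sources. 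The proof then splits into two cases depending on which class is large.

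In the long-dominated case the bound is purely a volume count: total long flow is at least $nB/4$, each unit of it consumes strictly more than $L$ physical edges, and the total physical edge capacity across $T$ rounds is at most $Tn$ (each matching has at most $n$ edges). Combining these gives $T = \Omega(BL) = \Omega(B\log n/\log B)$. In the short-dominated case I use a path-counting argument: since each pair $(u,v)$ has demand only $B/n$, a short flow of $\geq B/2$ out of $u$ must reach at least $n/2$ distinct destinations. A path of length $d$ in the time-expanded graph starting at $(u,0)$ is uniquely determined by the $d$ time steps at which it takes a physical edge, because in the integral setting each node has a unique physical successor (if any) at each time step. Hence the number of distinct destinations reachable from $u$ using at most $L$ physical edges is at most $\sum_{d=1}^L \binom{T}{d} \leq (eT/L)^L$, and the requirement that this exceeds $n/2$ yields $T = \Omega(L \cdot n^{1/L}) = \Omega(LB) = \Omega(B\log n/\log B)$ by the choice of $L$.

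The average completion time bound follows from a standard averaging reduction: if the average completion time were $C$, then by Markov's inequality at least half of the total demand $nB$ is completed by time $2C$, and rerunning the short/long dichotomy on this truncated schedule (with the $B/2$ threshold and the $n/2$ pigeonhole replaced by halved constants) forces $2C = \Omega(B\log n/\log B)$. The main technical hurdle I foresee is the tuning of the threshold $L$ so that the two cases produce matching asymptotics; the floor in the definition of $L$ is what ensures $n^{1/L} \geq B$, and the integrality of the routing is used crucially in the $\binom{T}{d}$ bound, which would weaken if a single node could fan out to many others in one round as it can under fractional matchings.
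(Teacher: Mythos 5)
Your proof is correct and follows essentially the same route as the paper's: the same short/long-path dichotomy at a threshold of order $\log n/\log B$, a capacity/volume count for the long case, and the integrality-based $\binom{T}{d}$ path-counting for the short case (the paper works with the median hop-count $h$ over pairs and a monotonicity argument on $h\,n^{1/h}$ where you fix the threshold $L$ directly and classify sources by flow, and it handles average completion time via the median rather than your Markov truncation---bookkeeping differences only). The one small point to patch is that $\sum_{d\le L}\binom{T}{d}\le (eT/L)^{L}$ presupposes $L\le T$; as in the paper, this is secured by the purely topological bound $T\ge \log_2 n$ (or by noting that otherwise the trivial count $2^{T}\ge n/2$ already yields the claimed bound in the only regime where $L>T$ can occur, namely $B=O(1)$).
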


\begin{proof}
    Fix any integral routing scheme $f$ and let $T$ be its makespan.
    For any node pair $u,v$, let $S_{u,v}$ denote the minimum number
    of physical edges used to route demands from $u$ to $v$,
    and let $h$ be the median of $S_{u,v}$ over all node pairs $u,v$.
    Let $d = \frac13 \log_2 n / \log_2 B$. We distinguish whether $h \geq d$ or $h < d$.

    \textbf{Case 1:} $h \geq d$. 
    Because we assumed $f$ obtained makespan $T$, there are exactly $nT$ total physical edges that may be used to route flow.
    Additionally, there are $Bn$ total units of flow that need to be routed across the entire network. 
    In particular, each node pair $u,v$ with $S_{u,v} \geq h$ requires
    at least $h$ physical edges per unit flow. Thus,
    there is a total of $\frac12 Bn$ demand that needs at least $h$
    physical edges per unit flow.
    Thus, the average amount of flow per physical edge is 
    at least
$\frac{Bnh}{2nT} \geq \frac{Bd}{2T}$ . 
    Since each physical edge
    has unit capacity and we assumed that $f$ was feasible,
    it must be that $T \geq Bd/2 \in \Omega(B \log_2(n) / \log_2 (B))$.

    \textbf{Case 2:} $h < d$.
    By the definition of $h$, for half of the node pairs $f$ routes some
    flow using at most $h$ physical edges.
    Thus, there must exist a sender $u^*$ that can route flow
    to at least $n/2$ receivers using at most $h$ physical edges each.
    Therefore, there must exist at least $n/2$ distinct paths from $u^*$ that reach those receivers within makespan $T$ using at most $h$ physical edges.
    We next compute how many such paths exist.
    There are $\binom{T}{i}$ distinct paths using
    exactly $i$ physical edges that obtain makespan $T$. 
    Thus, there are $\sum_{i=1}^h \binom{T}{i}$ distinct paths with at 
    most $h$ physical edges within makespan $T$.
    Thus, it must be that $n/2 \leq \sum_{i=1}^h \binom{T}{i}$.

    In the following, we derive a lower bound on $T$ from this inequality.
    We first note that the proof of \Cref{lem:indirect-small-B} shows that for any $B > 0$, it holds that $T \geq \log_2 n$, and thus,
    $T \geq \log_2 n / \log_2 B = 3d$ since we assume that $B \geq 2$.
    We can now observe that
    \begin{align*}
        \sum_{i=1}^h \binom{T}{i} \leq 2 \binom{T}{h} \leq 2 \frac{T^h}{h!} \ , 
    \end{align*}
    where the first inequality holds using $h \leq d \leq T/3$, and the second inequality is a well-known approximation of the binomial coefficient.
    Rearranging $n/2 \leq 2T^h/h!$ gives us
    \[
        T \geq \bigg(h! \cdot \frac{n}{4} \bigg)^{1/h} 
        = n^{1/h} \cdot (h!)^{1/h} \cdot \bigg(\frac14 \bigg)^{1/h} \ .
    \]
    Using Stirling's approximation $h! \geq \sqrt{2\pi h}\cdot(\frac{h}{e})^h$, we can derive
    $$ T \geq \frac{h}{e}n^{1/h} \cdot \biggl( \frac{\sqrt{2\pi h}}{4} \biggr)^{1/h} \geq \frac{h}{e} n^{1/h} \ , $$
    because $(\sqrt{2\pi h}/{4})^{1/h}$ converges to $1$ from above as $h \to \infty$.
 
    The derivative of $h \cdot n^{1/h}$ is equal to $n^{1/h} (h - \ln n) / h$, which is negative
for all $2 \leq h < \ln n$. Hence, the function $h \cdot n^{1/h}$ is strictly decreasing 
on this interval.
Since $\ln n > \frac12 \log_2 n \geq \frac12 \log_2 n / \log_2 B \ge d$,
we can conclude that $h \cdot n^{1/h} \geq d \cdot n^{1/d}$ for all $2 \leq h \leq d$.
Thus, $f$ has makespan at least $T \geq h \cdot n^{1/h} / e \in \Omega(B \log_2(n) / \log_2 (B))$. This completes the proof of Case 2.

Note that by our choice of $h$, our lower bounds on the makespan also apply to the median completion time, hence also give the same lower bound on the average completion time up to constants. This completes
the proof of the lemma. 
\end{proof}

This completes our results for the uniform demand setting. To lift our
results to the general setting, we use the following well-known reduction.

\begin{lemma}
    Given a connection schedule $\mathcal{M} = M_0,\hdots,M_{T-1}$ and routing protocol $f_{unif}$ over $\mathcal{M}$ that is feasible for uniform demands $B/n$ and obtains makespan $T$,
    then for any demand matrix $D$ with row and column sums no more than $B$, we can build a routing $f_D$ over the connection schedule $\mathcal{M}\circ\mathcal{M}$ that obtains makespan $2T$.
\end{lemma}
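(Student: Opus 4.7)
The plan is a standard two-phase Valiant-style routing on $\mathcal{M}\circ\mathcal{M}$: I use the first copy of $\mathcal{M}$ as \emph{Phase 1} (times $0$ through $T$) and the second copy as \emph{Phase 2} (times $T$ through $2T$), meeting at an intermediate layer of $n$ nodes at time $T$. For every source-destination pair $(u,v)$, I split the demand $D_{uv}$ uniformly across all $n$ possible intermediate nodes: Phase 1 must deliver $D_{uv}/n$ from $u$ to each intermediate $w$, and Phase 2 must then forward this amount from $w$ to $v$.

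Aggregating over eventual destinations, the Phase-1 demand from $u$ into $w$ is $\frac{1}{n}\sum_v D_{uv} = R_u/n \le B/n$, where $R_u$ denotes the row sum of $D$ at $u$. Symmetrically, the Phase-2 demand from $w$ into $v$ is $\frac{1}{n}\sum_u D_{uv} = C_v/n \le B/n$, where $C_v$ is the column sum at $v$. Both phases therefore fit pointwise within a uniform-$B/n$ demand pattern. To implement Phase 1, I run $f_{unif}$ on the first copy of $\mathcal{M}$ and, for each source $u$, scale down the flow on every path originating at $u$ by the factor $R_u/B \le 1$; this delivers exactly $R_u/n$ from $u$ to each $w$, and since flow only decreases, all edge capacities remain satisfied. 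Phase 2 is implemented analogously on the second copy of $\mathcal{M}$, scaling the flow on paths terminating at each destination $v$ by $C_v/B \le 1$.

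Finally, I obtain $f_D$ by concatenating Phase-1 and Phase-2 path-flow at the intermediate layer: for each triple $(u,w,v)$, every Phase-1 path from $(u,0)$ to $(w,T)$ is joined with every Phase-2 path from $(w,T)$ to $(v,2T)$, with joint weights chosen so that the total concatenated flow from $u$ through $w$ to $v$ equals $D_{uv}/n$. Summing over $w$ recovers the full demand $D_{uv}$ from $u$ to $v$. Because the physical edges of the two phases occupy disjoint time layers of $\mathcal{M}\circ\mathcal{M}$, and each phase is individually feasible, $f_D$ respects all capacity constraints and has makespan exactly $2T$. The only subtlety, and hence the main point requiring care, is the truncation of $f_{unif}$: one must check that the per-source scaling in Phase 1 and per-destination scaling in Phase 2 are simultaneously valid, which follows immediately from the observation that both scalings only remove flow from a routing that was already feasible, so no new capacity violations can be introduced.
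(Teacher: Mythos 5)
Your proposal is correct and follows essentially the same approach as the paper, which also appeals to Valiant Load Balancing: route each demand in equal shares $D_{uv}/n$ through all $n$ intermediate nodes, so each phase is dominated by the uniform $B/n$ pattern and can be handled by $f_{unif}$ on one copy of $\mathcal{M}$. You simply spell out the row-sum/column-sum bookkeeping and the scaling/concatenation details that the paper leaves implicit.
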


\begin{proof}
    This is a natural application of Valiant Load Balancing~\cite{vlb}, which is described at a high level below.
    To route $r$ demand between any node pair $u,v$, first traffic is routed from $u$ to all intermediate nodes $i$ in the network in equal shares $r/n$, using $f_{unif}$.
    Then, traffic is routed from intermediate nodes to its destination using $f_{unif}$.
    This allows us to treat routing any demand $D$ with row and column sums no more than $B$ as if it is two copies of uniform demand, requesting $B/n$ demand between node pairs. 
\end{proof}

\section{Conclusion}
In this paper, we focus on
two of the variants of coflow scheduling with rational demand matrices,
for which  $O(1)$-approximations were not previously known: 
We first present a 16-approximation algorithm for minimizing the average completion time under indirect fractional matching routings. Secondly, we address both the makespan and the average completion time objectives under indirect integral routing, and provide a full characterization of
the worst-case runtimes under both objectives for this regime. 
While this falls short of providing $O(1)$-approximations for indirect integral routing, our proposed algorithm   provides $O(\log n)$-approximations for the two objectives, in addition to being worst-case optimal;  
we also believe that the worst-case analysis will be helpful 
in determining 
whether a $O(1)$-approximation for indirect integral routing is possible. 
If it is,
then we will have shown, together with the results in this paper, that constant-approximation algorithms exist for all
possible coflow scheduling variants 
in \Cref{tab:results}.

\begin{refcontext}[sorting=nyt]
\printbibliography

@inproceedings{DinitzM20,
    author = {Michael Dinitz and
Benjamin Moseley},
    booktitle = {{INFOCOM}},
    pages = {1043--1052},
    publisher = {{IEEE}},
    title = {Scheduling for Weighted Flow and Completion Times in Reconfigurable
Networks},
    url = {https://doi.org/10.1109/INFOCOM41043.2020.9155537},
    year = {2020}
}

@article{Gonnet81,
  author       = {Gaston H. Gonnet},
  title        = {Expected Length of the Longest Probe Sequence in Hash Code Searching},
  journal      = {J. {ACM}},
  volume       = {28},
  number       = {2},
  pages        = {289--304},
  url          = {https://doi.org/10.1145/322248.322254},
  doi          = {10.1145/322248.322254},
  year         = {1981}
}

@article{Bar-NoyHKSS00,
    author = {Amotz Bar{-}Noy and
Magn{\'{u}}s M. Halld{\'{o}}rsson and
Guy Kortsarz and
Ravit Salman and
Hadas Shachnai},
    journal = {J. Algorithms},
    number = {2},
    pages = {422--450},
    title = {Sum Multicoloring of Graphs},
    url = {https://doi.org/10.1006/jagm.2000.1106},
    volume = {37},
    year = {2000}
}

@inproceedings{LLM25,
  author       = {Alexander Lindermayr and
                  Zhenwei Liu and
                  Nicole Megow},
  title        = {Polytope Scheduling with Groups: Unified Models and Optimal Guarantees},
  booktitle    = {{IPCO}},
  series       = {Lecture Notes in Computer Science},
  volume       = {16588},
  pages        = {265--281},
  doi          = {10.1007/978-3-032-28691-8\_18},
  publisher    = {Springer},
  year         = {2026}
}

@inproceedings{AmirWSWKA22,
    author = {Daniel Amir and
Tegan Wilson and
Vishal Shrivastav and
Hakim Weatherspoon and
Robert Kleinberg and
Rachit Agarwal},
    booktitle = {{STOC}},
    pages = {1339--1352},
    publisher = {{ACM}},
    title = {Optimal oblivious reconfigurable networks},
    url = {https://doi.org/10.1145/3519935.3520020},
    year = {2022}
}

@inproceedings{WilsonASKSW24,
    author = {Tegan Wilson and
Daniel Amir and
Nitika Saran and
Robert Kleinberg and
Vishal Shrivastav and
Hakim Weatherspoon},
    booktitle = {{STOC}},
    pages = {1865--1876},
    publisher = {{ACM}},
    title = {Breaking the {VLB} Barrier for Oblivious Reconfigurable Networks},
    url = {https://doi.org/10.1145/3618260.3649608},
    year = {2024}
}

@inproceedings{WilsonASWK23,
    author = {Tegan Wilson and
Daniel Amir and
Vishal Shrivastav and
Hakim Weatherspoon and
Robert Kleinberg},
    booktitle = {{APOCS}},
    pages = {1--16},
    publisher = {{SIAM}},
    title = {Extending Optimal Oblivious Reconfigurable Networks to all \emph{N}},
    url = {https://doi.org/10.1137/1.9781611977578.ch1},
    year = {2023}
}

@inproceedings{StoicaCoflow,
    author = {Mosharaf Chowdhury and
Ion Stoica},
    booktitle = {HotNets},
    pages = {31--36},
    publisher = {{ACM}},
    title = {Coflow: a networking abstraction for cluster applications},
    url = {https://doi.org/10.1145/2390231.2390237},
    year = {2012}
}

@article{Bar-NoyBHST98,
    author = {Amotz Bar{-}Noy and
Mihir Bellare and
Magn{\'{u}}s M. Halld{\'{o}}rsson and
Hadas Shachnai and
Tami Tamir},
    journal = {Inf. Comput.},
    number = {2},
    pages = {183--202},
    title = {On Chromatic Sums and Distributed Resource Allocation},
    url = {https://doi.org/10.1006/inco.1997.2677},
    volume = {140},
    year = {1998}
}

@article{HalldorssonKS11,
    author = {Magn{\'{u}}s M. Halld{\'{o}}rsson and
Guy Kortsarz and
Maxim Sviridenko},
    journal = {{ACM} Trans. Algorithms},
    number = {2},
    pages = {22:1--22:21},
    title = {Sum edge coloring of multigraphs via configuration {LP}},
    url = {https://doi.org/10.1145/1921659.1921668},
    volume = {7},
    year = {2011}
}

@inproceedings{ChowdhuryS15,
    author = {Mosharaf Chowdhury and
Ion Stoica},
    booktitle = {{SIGCOMM}},
    pages = {393--406},
    publisher = {{ACM}},
    title = {Efficient Coflow Scheduling Without Prior Knowledge},
    url = {https://doi.org/10.1145/2785956.2787480},
    year = {2015}
}

@inproceedings{JahanjouKR17,
    author = {Hamidreza Jahanjou and
Erez Kantor and
Rajmohan Rajaraman},
    booktitle = {{SPAA}},
    pages = {45--54},
    publisher = {{ACM}},
    title = {Asymptotically Optimal Approximation Algorithms for Coflow Scheduling},
    url = {https://doi.org/10.1145/3087556.3087567},
    year = {2017}
}

@article{KhullerLSSV19,
    author = {Samir Khuller and
Jingling Li and
Pascal Sturmfels and
Kevin Sun and
Prayaag Venkat},
    journal = {Theor. Comput. Sci.},
    pages = {420--431},
    title = {Select and permute: An improved online framework for scheduling to
minimize weighted completion time},
    url = {https://doi.org/10.1016/j.tcs.2019.07.026},
    volume = {795},
    year = {2019}
}

@inproceedings{ShafieeG17,
    author = {Mehrnoosh Shafiee and
Javad Ghaderi},
    booktitle = {{SPAA}},
    pages = {91--93},
    publisher = {{ACM}},
    title = {Brief Announcement: {A} New Improved Bound for Coflow Scheduling},
    url = {https://doi.org/10.1145/3087556.3087598},
    year = {2017}
}

@inproceedings{KhullerP16,
    author = {Samir Khuller and
Manish Purohit},
    booktitle = {{SPAA}},
    pages = {239--240},
    publisher = {{ACM}},
    title = {Brief Announcement: Improved Approximation Algorithms for Scheduling
Co-Flows},
    url = {https://doi.org/10.1145/2935764.2935809},
    year = {2016}
}

@inproceedings{QiuSZ15,
    author = {Zhen Qiu and
Cliff Stein and
Yuan Zhong},
    booktitle = {{SPAA}},
    pages = {294--303},
    publisher = {{ACM}},
    title = {Minimizing the Total Weighted Completion Time of Coflows in Datacenter
Networks},
    url = {https://doi.org/10.1145/2755573.2755592},
    year = {2015}
}

@inproceedings{ChowdhuryZS14,
    author = {Mosharaf Chowdhury and
Yuan Zhong and
Ion Stoica},
    booktitle = {{SIGCOMM}},
    pages = {443--454},
    publisher = {{ACM}},
    title = {Efficient coflow scheduling with Varys},
    url = {https://doi.org/10.1145/2619239.2626315},
    year = {2014}
}

@inproceedings{HalldorssonKS01,
    author = {Magn{\'{u}}s M. Halld{\'{o}}rsson and
Guy Kortsarz and
Hadas Shachnai},
    booktitle = {{RANDOM-APPROX}},
    pages = {114--126},
    publisher = {Springer},
    series = {Lecture Notes in Computer Science},
    title = {Minimizing Average Completion of Dedicated Tasks and Interval Graphs},
    url = {https://doi.org/10.1007/3-540-44666-4\_15},
    volume = {2129},
    year = {2001}
}

@article{AhmadiKPY20,
    author = {Saba Ahmadi and
Samir Khuller and
Manish Purohit and
Sheng Yang},
    journal = {Algorithmica},
    number = {12},
    pages = {3604--3629},
    title = {On Scheduling Coflows},
    url = {https://doi.org/10.1007/s00453-020-00741-3},
    volume = {82},
    year = {2020}
}

@inproceedings{RohwedderS25,
    author = {Lars Rohwedder and
Leander Schnaars},
    booktitle = {{ICALP}},
    pages = {128:1--128:19},
    publisher = {Schloss Dagstuhl - Leibniz-Zentrum f{\"{u}}r Informatik},
    series = {LIPIcs},
    title = {3.415-Approximation for Coflow Scheduling via Iterated Rounding},
    url = {https://doi.org/10.4230/LIPIcs.ICALP.2025.128},
    volume = {334},
    year = {2025}
}

@inproceedings{BhimarajuNV20,
    author = {Akhil Bhimaraju and
Debanuj Nayak and
Rahul Vaze},
    booktitle = {WiOpt},
    pages = {81--88},
    publisher = {{IEEE}},
    title = {Non-clairvoyant Scheduling of Coflows},
    url = {https://dl.ifip.org/db/conf/wiopt/wiopt2020/32.pdf},
    year = {2020}
}

@inproceedings{ImMPP19,
    author = {Sungjin Im and
Benjamin Moseley and
Kirk Pruhs and
Manish Purohit},
    booktitle = {{ICALP}},
    pages = {145:1--145:14},
    publisher = {Schloss Dagstuhl - Leibniz-Zentrum f{\"{u}}r Informatik},
    series = {LIPIcs},
    title = {Matroid Coflow Scheduling},
    url = {https://doi.org/10.4230/LIPIcs.ICALP.2019.145},
    volume = {132},
    year = {2019}
}

@article{ImKM18,
    author = {Sungjin Im and
Janardhan Kulkarni and
Kamesh Munagala},
    journal = {J. {ACM}},
    number = {1},
    pages = {3:1--3:33},
    title = {Competitive Algorithms from Competitive Equilibria: Non-Clairvoyant
Scheduling under Polyhedral Constraints},
    url = {https://doi.org/10.1145/3136754},
    volume = {65},
    year = {2018}
}

@book{schrijver,
    author = {Schrijver, Alexander and others},
    number = {2},
    publisher = {Springer},
    title = {Combinatorial optimization: polyhedra and efficiency},
    volume = {24},
    year = {2003}
}

@inproceedings{ChowdhuryKPYY19,
    author = {Mosharaf Chowdhury and
Samir Khuller and
Manish Purohit and
Sheng Yang and
Jie You},
    booktitle = {{SPAA}},
    pages = {123--134},
    publisher = {{ACM}},
    title = {Near Optimal Coflow Scheduling in Networks},
    url = {https://doi.org/10.1145/3323165.3323179},
    year = {2019}
}

@inproceedings{JLM25,
    author = {Sven J{\"{a}}ger and
Alexander Lindermayr and
Nicole Megow},
    booktitle = {{SODA}},
    pages = {3901--3930},
    publisher = {{SIAM}},
    title = {The Power of Proportional Fairness for Non-Clairvoyant Scheduling
under Polyhedral Constraints},
    url = {https://doi.org/10.1137/1.9781611978322.132},
    year = {2025}
}

@inproceedings{shale,
    author = {Daniel Amir and
Nitika Saran and
Tegan Wilson and
Robert Kleinberg and
Vishal Shrivastav and
Hakim Weatherspoon},
    booktitle = {{SIGCOMM}},
    pages = {449--464},
    publisher = {{ACM}},
    title = {Shale: {A} Practical, Scalable Oblivious Reconfigurable Network},
    url = {https://doi.org/10.1145/3651890.3672248},
    year = {2024}
}

@inproceedings{rotornet,
    author = {Soudeh Ghorbani and
Zibin Yang and
Philip Brighten Godfrey and
Yashar Ganjali and
Amin Firoozshahian},
    booktitle = {{SIGCOMM}},
    pages = {225--238},
    publisher = {{ACM}},
    title = {{DRILL:} Micro Load Balancing for Low-latency Data Center Networks},
    url = {https://doi.org/10.1145/3098822.3098839},
    year = {2017}
}

@inproceedings{rotornet-2,
    author = {William M. Mellette and
Alex Forencich and
Rukshani Athapathu and
Alex C. Snoeren and
George Papen and
George Porter},
    booktitle = {{SIGCOMM}},
    pages = {392--414},
    publisher = {{ACM}},
    title = {Realizing RotorNet: Toward Practical Microsecond Scale Optical Networking},
    url = {https://doi.org/10.1145/3651890.3672273},
    year = {2024}
}

@inproceedings{lightwave-fabrics,
author = {Liu, Hong and Urata, Ryohei and Yasumura, Kevin and Zhou, Xiang and Bannon, Roy and Berger, Jill and Dashti, Pedram and Jouppi, Norm and Lam, Cedric and Li, Sheng and Mao, Erji and Nelson, Daniel and Papen, George and Tariq, Mukarram and Vahdat, Amin},
title = {Lightwave Fabrics: At-Scale Optical Circuit Switching for Datacenter and Machine Learning Systems},
year = {2023},
isbn = {9798400702365},
publisher = {Association for Computing Machinery},
address = {New York, NY, USA},
url = {https://doi.org/10.1145/3603269.3604836},
booktitle = {Proceedings of the ACM SIGCOMM 2023 Conference},
pages = {499–515},
numpages = {17},
keywords = {data center networks, optical circuit switches, machine learning},
location = {New York, NY, USA},
series = {ACM SIGCOMM '23}
}

@inproceedings{sirius,
    author = {Hitesh Ballani and
Paolo Costa and
Raphael Behrendt and
Daniel Cletheroe and
Istv{\'{a}}n Haller and
Krzysztof Jozwik and
Fotini Karinou and
Sophie Lange and
Kai Shi and
Benn Thomsen and
Hugh Williams},
    booktitle = {{SIGCOMM}},
    pages = {782--797},
    publisher = {{ACM}},
    title = {Sirius: {A} Flat Datacenter Network with Nanosecond Optical Switching},
    url = {https://doi.org/10.1145/3387514.3406221},
    year = {2020}
}

@inproceedings{shoal,
    author = {Vishal Shrivastav and
Asaf Valadarsky and
Hitesh Ballani and
Paolo Costa and
Ki Suh Lee and
Han Wang and
Rachit Agarwal and
Hakim Weatherspoon},
    booktitle = {{NSDI}},
    pages = {255--270},
    publisher = {{USENIX} Association},
    title = {Shoal: {A} Network Architecture for Disaggregated Racks},
    url = {https://www.usenix.org/conference/nsdi19/presentation/shrivastav},
    year = {2019}
}

@article{mars,
    author = {Vamsi Addanki and
Chen Avin and
Stefan Schmid},
    journal = {Proc. {ACM} Meas. Anal. Comput. Syst.},
    number = {1},
    pages = {2:1--2:43},
    title = {Mars: Near-Optimal Throughput with Shallow Buffers in Reconfigurable
Datacenter Networks},
    url = {https://doi.org/10.1145/3579312},
    volume = {7},
    year = {2023}
}

@article{Valiant82,
    author = {Leslie G. Valiant},
    journal = {{SIAM} J. Comput.},
    number = {2},
    pages = {350--361},
    title = {A Scheme for Fast Parallel Communication},
    url = {https://doi.org/10.1137/0211027},
    volume = {11},
    year = {1982}
}

@article{Kim05,
    author = {Yoo Ah Kim},
    journal = {J. Algorithms},
    number = {1},
    pages = {42--57},
    title = {Data migration to minimize the total completion time},
    url = {https://doi.org/10.1016/j.jalgor.2004.07.009},
    volume = {55},
    year = {2005}
}

@article{GandhiHKS08,
    author = {Rajiv Gandhi and
Magn{\'{u}}s M. Halld{\'{o}}rsson and
Guy Kortsarz and
Hadas Shachnai},
    journal = {{ACM} Trans. Algorithms},
    number = {1},
    pages = {11:1--11:20},
    title = {Improved bounds for scheduling conflicting jobs with minsum criteria},
    url = {https://doi.org/10.1145/1328911.1328922},
    volume = {4},
    year = {2008}
}

@article{GandhiM09,
    author = {Rajiv Gandhi and
Juli{\'{a}}n Mestre},
    journal = {Algorithmica},
    number = {1},
    pages = {54--71},
    title = {Combinatorial Algorithms for Data Migration to Minimize Average Completion
Time},
    url = {https://doi.org/10.1007/s00453-007-9118-2},
    volume = {54},
    year = {2009}
}

@article{KhullerKW04,
    author = {Samir Khuller and
Yoo Ah Kim and
Yung{-}Chun (Justin) Wan},
    journal = {{SIAM} J. Comput.},
    number = {2},
    pages = {448--461},
    title = {Algorithms for Data Migration with Cloning},
    url = {https://doi.org/10.1137/S009753970342585X},
    volume = {33},
    year = {2004}
}

@article{Mestre10,
    author = {Juli{\'{a}}n Mestre},
    journal = {{SIAM} J. Comput.},
    number = {7},
    pages = {3038--3057},
    title = {Adaptive Local Ratio},
    url = {https://doi.org/10.1137/080731712},
    volume = {39},
    year = {2010}
}

@inproceedings{BansalK09,
    author = {Nikhil Bansal and
Subhash Khot},
    booktitle = {{FOCS}},
    pages = {453--462},
    publisher = {{IEEE} Computer Society},
    title = {Optimal Long Code Test with One Free Bit},
    url = {https://doi.org/10.1109/FOCS.2009.23},
    year = {2009}
}

@article{QueyranneS02,
    author = {Maurice Queyranne and
Maxim Sviridenko},
    journal = {J. Algorithms},
    number = {2},
    pages = {202--212},
    title = {A (2+epsilon)-approximation algorithm for the generalized preemptive
open shop problem with minsum objective},
    url = {https://doi.org/10.1016/S0196-6774(02)00251-1},
    volume = {45},
    year = {2002}
}

@article{Marx09,
    author = {D{\'{a}}niel Marx},
    journal = {Discret. Appl. Math.},
    number = {5},
    pages = {1034--1045},
    title = {Complexity results for minimum sum edge coloring},
    url = {https://doi.org/10.1016/j.dam.2008.04.002},
    volume = {157},
    year = {2009}
}

@inproceedings{vlb,
    author = {Leslie G. Valiant and
Gordon J. Brebner},
    booktitle = {{STOC}},
    pages = {263--277},
    publisher = {{ACM}},
    title = {Universal Schemes for Parallel Communication},
    url = {https://doi.org/10.1145/800076.802479},
    year = {1981}
}

@article{CACM,
    author = {Chen Avin and
Stefan Schmid},
    journal = {Commun. {ACM}},
    number = {6},
    pages = {44--53},
    title = {Revolutionizing Datacenter Networks via Reconfigurable Topologies},
    url = {https://doi.org/10.1145/3708980},
    volume = {68},
    year = {2025}
}
\end{refcontext}

\end{document}